\newtheorem{construction}[theorem]{Construction}
\title{A New Connection Between Node and Edge Depth Robust Graphs}
\titlerunning{A New Connection Between Node and Edge D.R. Graphs} 
\author{Jeremiah Blocki}{Department of Computer Science, Purdue University, West Lafayette, USA}{jblocki@purdue.edu}{}{}
\author{Mike Cinkoske}{Department of Computer Science, University of Illinois Urbana-Champaign, USA}{mjc18@illinois.edu}{}{Supported by NSF REU \#1934444. } 
\authorrunning{J. Blocki and M. Cinkoske} 
\keywords{Depth robust graphs, memory hard functions} 
\newcommand{\ignore}[1]{}
\newcommand{\COMMENTED}[1]{{}}
\newcommand{\eps}{\epsilon}
\newcommand{\mdef}[1]{{\ensuremath{#1}}\xspace}  
\newcommand{\node}[2]{\mdef{\langle#1, #2\rangle}}   
\def \indeg    {\mdef{\mathsf{indeg}}}
\newcommand{\namedref}[2]{\hyperref[#2]{#1~\ref*{#2}}}
\newcommand{\corlab}[1]{\label{cor:#1}}
\newcommand{\corref}[1]{\namedref{Corollary}{cor:#1}}
\newenvironment{proofof}[1]{\noindent {\em Proof of #1.  }}{$\hfill\square$}
\newenvironment{remindertheorem}[1]{\medskip \noindent {\bf Reminder of  #1.  }\em}{}
\begin{document}
\maketitle


\begin{abstract}
Given a directed acyclic graph (DAG) $G = (V,E)$, we say that $G$ is $(e,d)$-depth-robust (resp. $(e,d)$-edge-depth-robust) if for any set $S \subset V$ (resp. $S \subseteq E$) of at most $|S| \leq e$ nodes (resp. edges) the graph $G-S$ contains a directed path of length $d$. While edge-depth-robust graphs are potentially easier to construct many applications in cryptography require node depth-robust graphs with small indegree. We create a graph reduction that transforms an $(e, d)$-edge-depth-robust graph with $m$ edges into a $(e/2,d)$-depth-robust graph with $O(m)$ nodes and constant indegree. One immediate consequence of this result is the first construction of a provably $(\frac{n \log \log n}{\log n}, \frac{n}{(\log n)^{1 + \log \log n}})$-depth-robust graph with constant indegree, where previous constructions for $e =\frac{n \log \log n}{\log n}$ had $d = O(n^{1-\epsilon})$. Our reduction crucially relies on ST-Robust graphs, a new graph property we introduce which may be of independent interest. We say that a directed, acyclic graph with $n$ inputs and $n$ outputs is $(k_1, k_2)$-ST-Robust if we can remove any $k_1$ nodes and there exists a subgraph containing at least $k_2$ inputs and $k_2$ outputs such that each of the $k_2$ inputs is connected to all of the $k_2$ outputs. If the graph if $(k_1,n-k_1)$-ST-Robust for all $k_1 \leq n$ we say that the graph is maximally ST-robust. We show how to construct maximally ST-robust graphs with constant indegree and $O(n)$ nodes. Given a family $ \mathbb{M}$ of ST-robust graphs and an arbitrary $(e, d)$-edge-depth-robust graph $G$ we construct a new constant-indegree graph $  \mathrm{Reduce}(G, \mathbb{M})$ by replacing each node in $G$ with an ST-robust graph from $ \mathbb{M}$. We also show that ST-robust graphs can be used to construct (tight) proofs-of-space and (asymptotically) improved wide-block labeling functions. 
\end{abstract}

\section{Introduction}

Given a directed acyclic graph (DAG) $G = (V,E)$, we say that $G$ is $(e,d)$-reducible (resp. $(e,d)$-edge reducible) if there is a subset $S\subseteq V$ (resp. $S \subseteq E$) of $|S|\leq e$ nodes (resp. edges) such that $G-S$ does not contain a directed path of length $d$. If a graph is not $(e,d)$-reducible (resp. $(e,d)$-edge reducible) we say that the graph is $(e,d)$-depth robust (resp. $(e,d)$-edge-depth-robust). Depth robust graphs have found many applications in the field of cryptography in the construction of proofs of sequential work~\cite{ITCS:MahMorVad13}, proofs of space \cite{C:DFKP15,ITCS:Pietrzak19a}, and in the construction of data independent memory hard functions (iMHFs). For example, highly depth robust graphs are known to be necessary \cite{C:AlwBlo16} and sufficient \cite{EC:AlwBloPie17} to construct iMHF's with high amortized space time complexity. While edge depth-robust graphs are often easier to construct~\cite{Sch83}, most applications require node depth-robust graphs with small indegree.

It has been shown \cite{Val77} that in any DAG with $m$ edges and $n$ nodes and any $i \leq \log_2 n$, there exists a set $S_i$ of $\frac{mi}{\log n}$ edges that will destroy all paths of length $n/2^i$ forcing $\mathrm{depth}(G - S_i) \leq \frac{n}{2^i}$. For DAGs with constant indegree we have $m = O(n)$ edges so an equivalent condition holds for node depth robustness~\cite{C:AlwBlo16}, since a node can be removed by removing all the edges incident to it. In particular, there exists a set $S_i$ of $O(\frac{ni}{\log n})$ nodes such that  $\mathrm{depth}(G - S_i) \leq \frac{n}{2^i}$ for all $i < \log n$. It is known how to construct an $(c_1n / \log n, c_2n)$-depth-roubst graph, for suitable $c_1, c_2 > 0$ \cite{EC:AlwBloPie17} and an $(c_3n, c_4n^{1-\epsilon})$-depth-robust graph for small $\epsilon$ for  \cite{Sch83}. 

An open challenge is to construct constant indegree $(c_1 n i/\log n, c_2n/2^i)$-depth-robust graphs which match the Valiant bound~\cite{Val77} for intermediate values of $i = \omega(1)$ and $i = o(\log n)$. For example, when $i = \log \log n$ then the Valiant bound~\cite{Val77} does not rule out the existence of $(c_1 n i/\log n, c_2n/\log n)$-depth-robust graphs with constant indegree. Such a graph would yield asymptotically stronger iMHFs~\cite{C:BHKLXZ19}. While there are several constructions that are conjectured to be $(c_1 n i/\log n, c_2n/\log n)$-depth-robust the best provable lower bound for $(e=c ni/\log n,d)$-depth robustness of a constant indegree graph is $d = \Omega(n^{1-\eps})$. For edge-depth robustness we have constructions of graphs with $m = O(n \log n)$ edges which are $(e_i,d_i)$-edge depth robust for any $i$ with $e_i = mi/\log n$ and $d_i = n/\log^{i+1} n$ --- much closer to matching the Valiant bound~\cite{Val77}.

\subsection{Contributions}
Our main contribution is a graph reduction that transforms any $(e, d)$-edge-depth-robust graph with $m$ edges into a $(e/2, d)$-depth-robust graph with $O(m)$ nodes and constant indegree. Our reduction utilizes ST-Robust graphs, a new graph property we introduce and construct. We believe that ST-Robust graphs may be of independent interest. 

 Intuitively, a $(k_1, k_2)$-ST-Robust graph with $n$ inputs $I$ and $n$ outputs $O$ satisfies the property that, even after deleting $k_1$ nodes from the graph we can find $k_2$ inputs $x_1,\ldots,x_{k_2}$ and $k_2$ outputs $y_1,\ldots,y_{k_2}$ such that {\em every} input $x_i$ ($i \in[k_2]$) is still connected to {\em every} output $y_j$ ($j \in[k_2]$). If we can guarantee that the each directed path from $x_i$ to $y_j$ has length $d$ then we say that the graph is $(k_1, k_2,d)$-ST-Robust. A maximally depth-robust graph should be $(k_1,n-k_1)$ -depth robust for any $k_1$.

\begin{definition}{\bf{ST-Robust}} \label{def:st}
Let $G = (V, E)$ be a DAG with $n$ inputs, denoted by set $I$ and $n$ outputs, denoted by set $O$. Then $G$ is  $(k_1, k_2)$-ST-robust if $\forall D \subset V(G)$ with $|D| \leq k_1$, there exists subgraph $H$ of $G - D$ with $|I \cap V(H) | \geq k_2$ and $|O \cap V(H) | \geq k_2$ such that $\forall s \in I \cap V(H)$ and $\forall t \in O \cap V(H)$ there exists a path from $s$ to $t$ in $H$. If $\forall s \in I \cap V(H)$ and $\forall t \in O \cap V(H)$ there exists a path from $s$ to $t$ of length $\geq d$ then we say that $G$ is $(k_1,k_2,d)$-ST-robust.
\end{definition}

\begin{definition}{\bf{Maximally ST-Robust}}
Let $G = (V, E)$ be a constant indegree DAG with $n$ inputs and $n$ outputs. Then $G$ is $c_1$-maximally ST-robust (resp. $c_1$ max ST-robust with depth $d$) if there exists a constant $0 < c_1 \leq 1$ such that $G$ is $(k, n-k)$-ST-robust (resp. $(k,n-k,d)$-ST-robust) for all $k$ with $0 \leq k \leq c_1n$. If $c_1 = 1$, we just say that $G$ is maximally ST-Robust. 
\end{definition}

We show how to construct  maximally ST-robust graphs with constant indegree and $O(n)$ nodes and we show how maximally ST-robust graphs can be used to transform any $(e, d)$-edge-depth-robust graph $G$ with $m$ edges into a $(e/2, d)$-depth-robust graph $G'$ with $O(m)$ nodes and constant indegree. Intuitively, in our reduction each node $v \in V(G)$ with degree $\delta(v)$ is replaced with a maximally ST-robust graph $M_{\delta(v)}$ with $\delta(v)$ inputs/outputs. Incoming edges into $v$ are redirected into the inputs $I_{\delta(v)}$ of the ST-robust graph. Similarly, $v$'s outgoing edges are redirected out of the outputs $O_{\delta(v)}$ of the ST-robust graph. Because $M_{\delta(v)}$ is maximally ST-robust, when a node is removed from $M_{\delta(v)}$ the set of inputs and outputs where each input connects to every output has at most one input and one output node removed. Each input or output node removed from $M_{\delta(v)}$ corresponds to removing at most one edge from the original graph. Thus, removing $k$ nodes from $M_{\delta(v)}$ corresponds to destroying at most $2k$ edges in the original graph $G$. 

 Our reduction gives us a fundamentally new way to design node-depth-robust graphs: design an edge-depth-robust graph (easier) and then reduce it to a node-depth-robust graph. The reduction can be used with a construction from \cite{Sch83} to construct a $(\frac{n \log \log n}{\log n}, \frac{n}{(\log n) ^{1 + \log \log n}})$-depth-robust graph. We conjecture that several prior DAG constructions (e.g, \cite{EGS75,Sch83,EC:AlwBloPie18}) are actually $(n \log \log n, \frac{n}{\log n})$-edge-depth-robust. If any of these conjectures are true then our reduction would immediately yield the desired $(\frac{n \log \log n}{\log n}, \frac{n}{\log n})$-depth-robust graph.

We also present several other applications for maximally ST-robust graphs including (tight) proofs-of-space and wide block-labeling functions.


\section{Edge to Node Depth-Robustness}

In this section, we use the fact that linear sized, constant indegree, maximally ST-robust graphs exist to construct a transformation of an $(e,d)$-edge-depth robust graph with $m$ edges into an $(e,d)$-node-depth robust graph with constant indegree and $O(m)$ nodes. In the next section we will construct a family of ST-robust graphs that satisfies Theorem \ref{theorem:STRobust}.

\begin{theorem} \label{theorem:STRobust} {\bf (Key Building Block)}
  There exists a family of graphs  $\mathbb{M} = \{M_n\}_{n=1}^\infty$ with the property that for each $n \geq 1$, $M_n$ has constant indegree, $O(n)$ nodes, and is maximally ST-Robust.
\end{theorem}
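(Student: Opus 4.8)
The plan is to construct $M_n$ explicitly as a layered graph built from known expander-type gadgets, so that the "every surviving input connects to every surviving output" property follows from a counting argument on expansion. The natural starting point is a bipartite expander: if we take a constant-degree bipartite graph on $n$ inputs and $n$ outputs with the property that every subset of inputs of size $s$ has at least $\min(n, (1+\delta)s)$ neighbors (or the stronger property that it has $> n - s$ neighbors whenever $s$ is not too small), then after deleting $k$ nodes one can try to recover $n-k$ inputs and $n-k$ outputs that are fully connected. The difficulty is that a single bipartite layer only gives a bipartite (distance-one) connection, and it does not by itself guarantee the all-pairs property after deletions. So the plan is to route through a superconcentrator-like or multi-butterfly structure: stack $O(\log n)$ layers of constant-degree bipartite expanders so that the composed graph is an $n$-superconcentrator with $O(n)$ nodes and constant indegree (these are classical, e.g. the linear-size superconcentrators of Gabber–Galil / Pippenger / Valiant), and then argue robustness.

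**First I would** recall the precise property of linear-size superconcentrators and the associated expansion: a superconcentrator guarantees vertex-disjoint paths between any equal-size subsets of inputs and outputs, but we need something stronger and different — that after an \emph{arbitrary} deletion of $k$ nodes, a $(n-k)$-subset of inputs and a $(n-k)$-subset of outputs remain pairwise connected (not via disjoint paths, but every-to-every). **The key step** is therefore to design the gadget so that its "middle" is resilient: e.g., a construction with a small set of hub nodes that every input can reach and that can reach every output, but hubs alone are vulnerable to deletion. The fix is to iterate: use an ST-robust graph on $n$ nodes whose middle layer is itself (recursively) an ST-robust graph on $\Theta(n)$ nodes, i.e. $M_n$ feeds into a copy of $M_{c n}$ and out again. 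With a careful accounting — each deleted node kills at most one input-slot and one output-slot at the current level, and at most one "slot" recursively — one gets that $c_1 = 1$ is achievable, or at least some constant $c_1 > 0$, matching the Theorem's claim of maximal ST-robustness (note the theorem as stated asks for the $c_1=1$ version). I would set up the recursion $|V(M_n)| \le a n + |V(M_{bn})|$ with $b < 1$ to get $|V(M_n)| = O(n)$, and track indegree additively across the $O(\log n)$ recursion levels — here one must be careful that indegree stays \emph{constant}, which forces the per-level contribution to the indegree to be bounded and not to accumulate, so the recursion must reuse the same bounded-degree interface at each level rather than fanning out.

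**The main obstacle** I expect is exactly this tension between three requirements at once: constant indegree, $O(n)$ total size, and the strong all-pairs-after-deletion property for \emph{every} $k \le n$ (maximal robustness). Getting any two is easy; getting all three is the crux. A single bipartite expander gives constant degree and linear size but not all-pairs connectivity; a "complete bipartite via a single hub" gives all-pairs connectivity and linear size but the hub has linear degree and is trivially destroyed by one deletion; a full binary-tree-of-concentrators gives constant degree but either $\omega(n)$ size or fails maximal robustness near $k = n$. I anticipate the resolution is a multi-butterfly / multi-layer superconcentrator where the robustness is proved by a matching/Hall-type argument: after removing $D$ with $|D| = k$, show that the bipartite "reachability" relation between surviving inputs and surviving outputs, restricted appropriately, contains a complete bipartite subgraph of size $(n-k) \times (n-k)$, using the expansion to rule out any small "blocking set." **A secondary subtlety** is the boundary behavior: when $k$ is close to $n$, only a few inputs/outputs survive and the expansion bounds degrade, so I would handle the regime $k \ge (1-\epsilon) n$ separately, possibly by a direct argument or by allowing the constant $c_1 < 1$ and then noting a small additional gadget (or a constant-factor blow-up) upgrades $c_1$ to $1$. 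Finally I would verify the claimed consequence is consistent — that plugging this $M_n$ into $\mathrm{Reduce}(G,\mathbb{M})$ indeed costs only $O(m)$ nodes, which it does since $\sum_v |V(M_{\delta(v)})| = \sum_v O(\delta(v)) = O(m)$.
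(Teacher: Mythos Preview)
Your proposal correctly identifies the core tension (constant indegree, $O(n)$ size, and maximal robustness for every $k$) and even anticipates two ingredients the paper actually uses --- a superconcentrator sandwich and a separate boosting step from $c_1<1$ to $c_1=1$. But as written it is a plan, not a proof: the central step, showing that some layered expander / multi-butterfly / recursive superconcentrator yields a complete bipartite $(n-k)\times(n-k)$ reachability subgraph after deleting $k$ nodes, is asserted via a ``Hall-type argument'' that is never carried out. Expansion gives you large neighborhoods, and the superconcentrator property gives you vertex-disjoint paths between equal-size sets, but neither directly yields the \emph{every-to-every} connectivity you need after an adversarial deletion; you acknowledge this (``we need something stronger and different'') without supplying the mechanism. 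The recursion you sketch also has the indegree-accumulation problem you flag but do not solve.

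The paper takes a quite different route and it is worth seeing what replaces your missing step. Instead of expanders, the inner gadget is built from three copies of Schnitger's constant-indegree $(cn,n^{2/3})$-depth-robust graph $H_n$, linked by $\tau$ random edges per node. Depth-robustness guarantees that after deleting $cn/2$ nodes each copy still contains $\Theta(n^{1/3})$ vertex-disjoint paths of length $n^{2/3}$; a probabilistic union-bound argument over all deletion sets then shows that, with high probability, most upper-half paths in one layer hit most lower-half paths in the next, yielding large sets $A\subseteq V(H_n^1)$ and $B\subseteq V(H_n^3)$ with $|A|,|B|\ge c'n$ and full $A$-to-$B$ connectivity. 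Sandwiching this between two linear-size superconcentrators converts ``some large $A,B$ are fully connected'' into ``all but $|S|$ inputs and all but $|S|$ outputs are fully connected,'' giving a $c'$-maximally ST-robust graph. Finally, the boost to $c_1=1$ is obtained not by a boundary argument near $k\approx n$ but by placing $\lceil 1/c'\rceil$ parallel copies between a fresh input and output layer and applying pigeonhole: any deletion of $k$ nodes leaves at least one copy with at most $c'n$ deletions. This last trick is exactly the ``constant-factor blow-up'' you guessed at, so that part of your intuition is on target; what your plan is missing is a concrete replacement for the depth-robust-plus-random-edges core that actually delivers the all-pairs property.
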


\subsection{Reduction Definition}

Let $G = (V, E)$ be a DAG, and let $\mathbb{M}$ be as in Theorem \ref{theorem:STRobust}. Then we define Reduce($G$, $\mathbb{M}$) in construction \ref{con:reduce} as follows:

\begin{construction}[\bf{Reduce(G, $\mathbb{M}$)}] \label{con:reduce}
  Let $G = (V, E)$ and let $\mathbb{M}$ be the family of graphs defined above. For each $M_n \in \mathbb{M}$, we say that $M_n = (V(M_n), E(M_n))$, with $V(M_n) = I(M_n) \cup O(M_n) \cup D(M_n)$, where $I(M_n)$ are the inputs of $M_n$, $O(M_n)$ are the outputs, and $D(M_n)$ are the internal vertices. For $v \in V$, let $\delta(v) = \max\{\mathrm{indegee}(v), \mathrm{outdegree}(v) \}$ Then we define Reduce($G$)$\ = (V_R, E_R)$, where $V_R = \{(v, w) | v \in V, w \in V(M_{\delta(v)}) \}$ and $E_R = E_{internal} \cup E_{external}$. We let $E_{internal} = \{((v, u'), (v, w')) | v \in V, (u', w') \in E(M_{\delta(v)}) \}$. Then for each  $v \in V$, we define an $In(v) = \{u ~:~(u,v) \in E\}$ and $Out(v) = \{u~:~(v,u) \in E\}$ and then pick two injective mappings $\pi_{in,v}: In(v) \rightarrow I(V(M_{\delta(v)}))$ and  $\pi_{out,v}: Out(v) \rightarrow O(V(M_{\delta(v)}))$. We let $E_{external} = \{ ((u, \pi_{out, u}(v)),(v, \pi_{in, v}(u))) ~:~ (u,v)\in E \}$.
 
  Intuitively, to costruct Reduce($G$, $\mathbb{M}$) we replace every node of $G$ with a constant indegree, maximally ST-robust graph, mapping the edges connecting two nodes from the outputs of one ST-robust graph to the inputs of another.   Then for every $e = (u, w) \in E$, add an edge from an output of $M_{\delta(u)}$ to an input of $M_{\delta(w)}$ such that the outputs of $M_{\delta(u)}$ have outdegree at most 1, and the inputs of  $M_{\delta(w)}$ have indegree at most 1. If $v \in V$ is replaced by $M_{\delta(v)}$, then we call $v$ the \textit{genesis node} and $M_{\delta(v)}$ its  \textit{metanode}.
\end{construction}

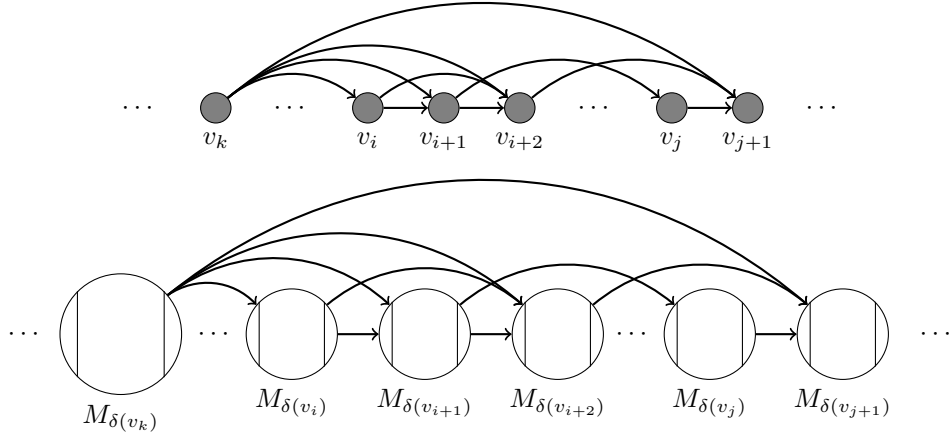
\begin{figure}[t]
  \centering
  \label{transform}
  \begin{tikzpicture}
    [place/.style={circle, draw=black, fill=gray, minimum size=4mm},
      transform/.style={circle, draw=black, fill=none, minimum size=12mm},
            transformLarge/.style={circle, draw=black, fill=none, minimum size=16mm},
    arr/.style={<-, thick}]

    \node[draw=none] (e1) at (-2,0) {$\cdots$};

    \node[place] (vk) at (-1, 0) [label=below:$v_k$] {};
    \node[draw=none] (e1) at (0,0) {$\cdots$};
    \node[place] (vi0) at (1,0) [label=below:$v_i$] {}
    edge [arr,bend right=40] node[auto] {} (vk);

    \node[place] (vi1) at (2,0) [label=below:$v_{i+1}$] {}
    edge [arr] node[auto] {} (vi0)
    edge [arr,bend right=40] node[auto] {} (vk);

    \node[place] (vi2) at (3,0) [label=below:$v_{i+2}$] {}
    edge [arr] node[auto] {} (vi1)
    edge [arr, bend right=40] node[auto] {} (vi0)
    edge [arr,bend right=40] node[auto] {} (vk);

    \node[draw=none] at (4,0) {$\cdots$};
    \node[place] (vj0) at (5,0) [label=below:$v_{j}$] {}
    edge [arr, bend right=40] node[auto] {} (vi1);
    \node[place] (vj1) at (6,0) [label=below:$v_{j+1}$] {}
    edge [arr] node[auto] {} (vj0)
    edge [arr, bend right=40] node[auto] {} (vi2)
    edge [arr,bend right=40] node[auto] {} (vk);

    \node[draw=none] at (7,0) {$\cdots$};

    \node[draw=none] (e1) at (-3.5,-3) {$\cdots$};

     \node[transformLarge] (Vk) at (-2.25, -3) [label=below:$M_{\delta(v_k)}$] {};
    \pgfpathmoveto{\pgfpointanchor{Vk}{north east}}
    \pgfpathlineto{\pgfpointanchor{Vk}{south east}}
    \pgfusepath{draw}
    
    \pgfpathmoveto{\pgfpointanchor{Vk}{north west}}
    \pgfpathlineto{\pgfpointanchor{Vk}{south west}}
    \pgfusepath{draw}
    
    \node[draw=none] (e1) at (-1,-3) {$\cdots$};

    \node[transform] (Vi0) at (0, -3) [label=below:$M_{\delta(v_i)}$] {}
    edge [arr, bend right=40] node[auto] {} (Vk);
    \pgfpathmoveto{\pgfpointanchor{Vi0}{north east}}
    \pgfpathlineto{\pgfpointanchor{Vi0}{south east}}
    \pgfusepath{draw}
    
    \pgfpathmoveto{\pgfpointanchor{Vi0}{north west}}
    \pgfpathlineto{\pgfpointanchor{Vi0}{south west}}
    \pgfusepath{draw}

    \node[transform] (Vi1) at (1.75, -3) [label=below:$M_{\delta(v_{i+1})}$] {} 
    edge [arr] node[auto] {} (Vi0)
    edge [arr,bend right=40] node[auto] {} (Vk);

    \pgfpathmoveto{\pgfpointanchor{Vi1}{north east}}
    \pgfpathlineto{\pgfpointanchor{Vi1}{south east}}
    \pgfusepath{draw}
    
    \pgfpathmoveto{\pgfpointanchor{Vi1}{north west}}
    \pgfpathlineto{\pgfpointanchor{Vi1}{south west}}
    \pgfusepath{draw}

    \node[transform] (Vi2) at (3.5, -3) [label=below:$M_{\delta(v_{i+2})}$] {}
    edge [arr] node[auto] {} (Vi1)
    edge [arr, bend right=40] node[auto] {} (Vi0)
    edge [arr,bend right=40] node[auto] {} (Vk);

    \pgfpathmoveto{\pgfpointanchor{Vi2}{north east}}
    \pgfpathlineto{\pgfpointanchor{Vi2}{south east}}
    \pgfusepath{draw}
    
    \pgfpathmoveto{\pgfpointanchor{Vi2}{north west}}
    \pgfpathlineto{\pgfpointanchor{Vi2}{south west}}
    \pgfusepath{draw}

    \node[draw=none] (e1) at (4.5,-3) {$\cdots$};

    \node[transform] (Vj0) at (5.5, -3) [label=below:$M_{\delta(v_j)}$] {}
        edge [arr, bend right=40] node[auto] {} (Vi1);
    \pgfpathmoveto{\pgfpointanchor{Vj0}{north east}}
    \pgfpathlineto{\pgfpointanchor{Vj0}{south east}}
    \pgfusepath{draw}
    
    \pgfpathmoveto{\pgfpointanchor{Vj0}{north west}}
    \pgfpathlineto{\pgfpointanchor{Vj0}{south west}}
    \pgfusepath{draw}

    \node[transform] (Vj1) at (7.25, -3) [label=below:$M_{\delta(v_{j+1})}$] {}
    edge [arr] node[auto] {} (Vj0)
    edge [arr, bend right=40] node[auto] {} (Vi2)
    edge [arr,bend right=40] node[auto] {} (Vk);

    \pgfpathmoveto{\pgfpointanchor{Vj1}{north east}}
    \pgfpathlineto{\pgfpointanchor{Vj1}{south east}}
    \pgfusepath{draw}
    
    \pgfpathmoveto{\pgfpointanchor{Vj1}{north west}}
    \pgfpathlineto{\pgfpointanchor{Vj1}{south west}}
    \pgfusepath{draw}
        \node[draw=none] (e1) at (8.5,-3) {$\cdots$};

    
  \end{tikzpicture}
  \caption{Diagram of the transformation Reduce($G, \mathbb{M}$) }
\end{figure}

\subsection{Proof of Main Theorem}


We now state the main result of this section which says that if $G$ is edge-depth robust then $\mathrm{Reduce}(G, \mathbb{M})$ is node depth-robust.

\newcommand{\thmReductionEdgeToNode}{  Let $G$ be an (e, d)-edge-depth-robust DAG with $m$ edges. Let $\mathbb{M}$ be a family of max ST-Robust graphs with constant indegree. Then $G' = (V', E') = \mathrm{Reduce}(G, \mathbb{M})$ is $(e/2, d)$-depth robust. Furthermore, $G'$ has maximum indegree  $\max_{v \in V(G)}\{ \mathrm{indeg}\left(M_{\delta(v)}\right)\}$, and its number of nodes is $\sum_{v \in V(G)} \left|V(M_{\delta(v)})\right|$ where $\delta(v) = \max\{\mathrm{indeg}(v),\mathrm{outdeg}(v)\}$.
}
\begin{theorem}
  \label{general}
\thmReductionEdgeToNode
\end{theorem}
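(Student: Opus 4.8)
The plan is to read off the two structural claims directly from Construction~\ref{con:reduce} and then to prove $(e/2,d)$-depth-robustness by lifting paths from $G-S$ to $G'-S'$. The node count is immediate, since $V_R=\{(v,w):v\in V,\ w\in V(M_{\delta(v)})\}$ gives $|V_R|=\sum_{v\in V(G)}|V(M_{\delta(v)})|$. For the indegree bound, every edge of $E_{internal}$ stays inside a single metanode $M_{\delta(v)}$ and so contributes indegree at most $\mathrm{indeg}(M_{\delta(v)})$, while every edge of $E_{external}$ points into an input node of some metanode, and by construction each input node receives at most one external edge and has indegree $0$ inside its own metanode; hence the maximum indegree of $G'$ equals $\max_{v\in V(G)}\{\mathrm{indeg}(M_{\delta(v)})\}$.

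For depth-robustness, fix an arbitrary $S'\subseteq V'$ with $|S'|\le e/2$; I must exhibit a directed path of length $d$ in $G'-S'$. For each genesis node $v$ write $S'_v=\{w:(v,w)\in S'\}\subseteq V(M_{\delta(v)})$ and $k_v=|S'_v|$, so $\sum_v k_v=|S'|\le e/2$. Since each $M_{\delta(v)}$ is maximally ST-robust it is $(k_v,\delta(v)-k_v)$-ST-robust, so by Definition~\ref{def:st} deleting $S'_v$ leaves a subgraph $H_v$ of $M_{\delta(v)}-S'_v$ missing at most $k_v$ of its $\delta(v)$ input nodes and at most $k_v$ of its $\delta(v)$ output nodes, in which every surviving input is connected to every surviving output within $H_v$ (when $k_v\ge\delta(v)$ this is vacuous and I take $H_v$ empty). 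Now build $S\subseteq E$ by adding an edge $(u,v)\in E$ to $S$ whenever $\pi_{out,u}(v)\notin V(H_u)$ or $\pi_{in,v}(u)\notin V(H_v)$. Because $\pi_{in,v}$ and $\pi_{out,v}$ are injective, the edges put into $S$ ``on the input side'' of a fixed metanode $M_{\delta(v)}$ number at most the missing inputs, i.e.\ at most $k_v$, and at most $k_v$ more come ``on the output side''; hence $|S|\le\sum_v 2k_v=2|S'|\le e$.

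Since $G$ is $(e,d)$-edge-depth-robust and $|S|\le e$, the graph $G-S$ contains a directed path $P=(v_0,v_1,\dots,v_d)$, and I lift it to $G'-S'$ piece by piece: begin at the vertex $(v_0,\pi_{out,v_0}(v_1))$; for $i=1,\dots,d$ traverse the external edge derived from $(v_{i-1},v_i)$, which leads to $(v_i,\pi_{in,v_i}(v_{i-1}))$; and for $1\le i\le d-1$ then follow a path inside $H_{v_i}$ from the input $\pi_{in,v_i}(v_{i-1})$ to the output $\pi_{out,v_i}(v_{i+1})$, which exists by the ST-robust connectivity of $H_{v_i}$ and which ends exactly at the tail of the next external edge, so that the pieces concatenate into a single walk. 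Every edge of $P$ survives in $G-S$, so by the definition of $S$ we have $\pi_{out,v_{i-1}}(v_i)\in V(H_{v_{i-1}})$ and $\pi_{in,v_i}(v_{i-1})\in V(H_{v_i})$ for all relevant $i$; hence every vertex visited lies in some $V(H_v)$, which is disjoint from $S'_v$, so the walk avoids $S'$. Since $G'$ is a DAG (order the genesis nodes topologically and then the vertices within each metanode), this walk is a directed path, and it contains the $d$ distinct external edges derived from $P$, so it has length at least $d$; a length-$d$ subpath witnesses $(e/2,d)$-depth-robustness.

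The main obstacle is the counting in the second paragraph: one must show that deleting $k_v$ vertices from a metanode $M_{\delta(v)}$ can destroy at most $2k_v$ edges of $G$ — at most $k_v$ entering $v$ and at most $k_v$ leaving $v$ — which is exactly where \emph{maximal} ST-robustness of the gadget (rather than mere depth robustness) does the work and where the factor $2$ in the statement originates. Once the subgraphs $H_v$ are fixed, the lifting step is a routine but careful verification that consecutive pieces meet at the right vertices through the external edges and that every vertex used stays clear of $S'$.
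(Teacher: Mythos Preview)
Your proof is correct and follows essentially the same approach as the paper's own proof: partition the deleted set $S'$ over the metanodes, invoke maximal ST-robustness inside each $M_{\delta(v)}$ to obtain the surviving subgraphs $H_v$, declare an edge of $G$ ``irreparable'' when one of its two images misses $H_u$ or $H_v$, bound the number of irreparable edges by $2|S'|$ (the paper's Claim~\ref{claim:irr-bound}), and then lift a path in $G-S$ to $G'-S'$ through the $H_v$'s (the paper's Claim~\ref{claim:paths}). Your handling of the degenerate case $k_v\ge\delta(v)$ by taking $H_v$ empty matches the paper's treatment of ``irreparable'' genesis nodes, and your explicit description of the lifted walk is a bit more detailed than the paper's sketch but otherwise identical in content.
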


A formal proof can be found in Appendix \ref{apdx:MissingProof}. We briefly outline the intuition for this proof below. 

\begin{proof} (Intuition)  The first thing we node is that each graph $M_{\delta(v)}$ has constant indegree at most $c \delta(v)$ nodes for some constant $c>0$. Therefore, the graph $G'$ has $\sum_{v \in V(G)} \left|V( M_{\delta(v)})\right| \leq c \sum_v \delta(v) \leq 2cm$ nodes and $G'$ has constant indegree.

Now for any set $S \subseteq V'$ of nodes we remove from $G'$ we will map $S$ to a corresponding set $S_{irr} \subseteq E$ of at most $|S_{irr}| \leq 2|S|$ irrepairable edges in $G$. We then prove that any path $P$ in $G-S_{irr}$ corresponds to a longer path $P'$ in $G' - S$ that is at least as long. Intuitively, each  incoming edge $(u,v)$ (resp. outgoing edge $(v,w)$) in $E(G)$ corresponds to an input node (resp. output node) in $v$'s corresponding metanode $M_{\delta(v)}$ which we will label $x_{u,v}$ (resp. $y_{v,w}$). If $S \subseteq V'$ removes at most $k$ nodes from the metanode $M_{\delta(v)}$ then, by maximal ST-robustness, we still can find $\delta(v)-k$ inputs and $\delta(v)-k$ outputs that are {\em all} pairwise connected. If $x_{u,v}$ (resp. $y_{v,w}$) is not part of this pairwise connected subgraph then we will add the corresponding edge $(u,v)$ (resp. $(v,w)$) to the set $S_{irr}$.  Thus, the set $S_{irr}$ will have size at most $2|S|$  Claim \ref{claim:irr-bound} in the appendix).

Intuitively, any path $P$ in $G-S_{irr}$ can be mapped to a longer path $P'$ in $G'-S$ (Claim \ref{claim:paths}). If $P$ contains the edges $(u,v),(v,w)$ then we know that the input node $x_{u,v}$ and output node $y_{u,v}$ node in $M_{\delta(v)}$ are still connected in $G'-S$.  
\end{proof}

\begin{corollary}(of Theorem \ref{general})
If there exists some constants $c_1, c_2$, such that we have a family $\mathbb{M} = \{M_n\}_{n=1}^\infty $ of linear sized $|V(M_n)| \leq c_1 n$, constant indegree $\mathrm{indeg}(M_n) \leq c_2$, and maximally ST-Robust graphs, then  $\mathrm{Reduce}(G, \mathbb{M})$ has maximum indegree $c_2$ and the number of nodes is at most $2c_1m$.
\end{corollary}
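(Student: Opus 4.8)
This is essentially a bookkeeping consequence of Theorem~\ref{general}, so the plan is to invoke that theorem verbatim and then substitute the uniform bounds supplied by the hypothesis on $\mathbb{M}$. Recall that Theorem~\ref{general} already tells us that $G' = \mathrm{Reduce}(G,\mathbb{M})$ has maximum indegree exactly $\max_{v \in V(G)}\{\mathrm{indeg}(M_{\delta(v)})\}$ and exactly $\sum_{v \in V(G)}|V(M_{\delta(v)})|$ nodes, where $\delta(v) = \max\{\mathrm{indeg}(v),\mathrm{outdeg}(v)\}$. (The depth-robustness claim $(e/2,d)$ is also inherited, but it is not part of this corollary's statement, so I would not need to re-derive it.)

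\textbf{Step 1: the indegree bound.} By hypothesis every member of the family satisfies $\mathrm{indeg}(M_n) \le c_2$; in particular $\mathrm{indeg}(M_{\delta(v)}) \le c_2$ for each $v \in V(G)$. Plugging this into the expression from Theorem~\ref{general} gives $\max_{v}\{\mathrm{indeg}(M_{\delta(v)})\} \le c_2$, so $G'$ has maximum indegree at most $c_2$, a constant.

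\textbf{Step 2: the node-count bound.} Using $|V(M_n)| \le c_1 n$ for every $n$, the node count of $G'$ is at most $\sum_{v \in V(G)} c_1\,\delta(v) = c_1 \sum_{v \in V(G)} \delta(v)$. Now I would bound $\sum_v \delta(v)$ by the crude inequality $\delta(v) = \max\{\mathrm{indeg}(v),\mathrm{outdeg}(v)\} \le \mathrm{indeg}(v) + \mathrm{outdeg}(v)$, so that $\sum_{v} \delta(v) \le \sum_{v}\mathrm{indeg}(v) + \sum_{v}\mathrm{outdeg}(v)$. Each of these two sums counts every edge of $G$ exactly once (each directed edge contributes $1$ to the indegree of its head and $1$ to the outdegree of its tail), hence $\sum_{v}\mathrm{indeg}(v) = \sum_{v}\mathrm{outdeg}(v) = |E| = m$, giving $\sum_v \delta(v) \le 2m$ and therefore at most $2c_1 m$ nodes in $G'$.

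\textbf{Main obstacle.} There is no real obstacle here; the only thing to be careful about is the handshake-type accounting in Step~2, namely that replacing $\delta(v)$ by $\mathrm{indeg}(v)+\mathrm{outdeg}(v)$ is valid and that the resulting sum telescopes to $2m$ rather than $m$. Everything else is immediate from Theorem~\ref{general} and the two uniform constants $c_1,c_2$ assumed to exist, so the corollary follows.
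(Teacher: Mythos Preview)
Your proposal is correct and matches the paper's own reasoning: the corollary is stated without a separate proof, but the intuition sketch for Theorem~\ref{general} carries out exactly the same two-step bookkeeping---bounding the indegree by the uniform constant and bounding $\sum_v |V(M_{\delta(v)})| \le c_1\sum_v \delta(v) \le 2c_1 m$ via the handshake argument. There is nothing to add.
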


The next corollary states that if we have a family of maximally ST-Robust graphs with $\mathbb{M} = \{M_k\}_{k=1}^\infty$ depth $d_k$ then we can transform any $(e, d)$-edge-depth-robust DAG $G=(V,E)$ with maximum degree $\delta = \max_{v \in V} \delta(v)$ into $(e/2,d \cdot  d_\delta)$-depth robust graph. Instead of replacing each node $v \in G$ with a copy of $M_{\delta(v)}$, we instead replace each node with a copy of $M_{\delta, v} := M_\delta$, attaching the edges same way as in Construction \ref{con:reduce}. Thus the transformed graph $G'$ has $|V(G)|\times \left| M_{\delta}\right|$ nodes and constant indegree. Intuitively, any path $P$ of length $d$ in $G-S_{irr}$  now maps to a path $P'$ of length $d \times d_\delta$ --- if $P$ contains the edges $(u,v),(v,w)$ then we know that the input node $x_{u,v}$ and output node $y_{u,v}$ node in $M_{\delta,v}$ are connected  in $G'-S$ {\em by a path of length at least} $d_\delta$.  

\newcommand{\thmCorrolaryofDRone}{(of Theorem \ref{general}) Suppose that there exists a family $\mathbb{M} = \{M_k\}_{k=1}^\infty$ of max ST-Robust graphs with depth $d_k$ and constant indegree. Given any $(e, d)$-edge-depth-robust DAG $G$ with $n$ nodes and maximum degree $\delta$ we can construct a DAG $G'$ with $n \times \left| M_{\delta}\right|$ nodes and constant indegree that is $(e/2,d\cdot d_\delta)$-depth robust.}
\begin{corollary} \corlab{DRone} \thmCorrolaryofDRone
\end{corollary}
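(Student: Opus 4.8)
The plan is to re-run the proof of Theorem~\ref{general} (Appendix~\ref{apdx:MissingProof}) essentially verbatim, changing only two things: (i) the reduction is modified so that \emph{every} genesis node is replaced by a copy of the \emph{same} graph $M_\delta$, and (ii) the final path-lifting step is made to track path lengths, invoking the depth-$d_\delta$ form of maximal ST-robustness in place of the plain one. Concretely, I would first modify Construction~\ref{con:reduce}: put $\delta=\max_{v\in V(G)}\delta(v)$ and replace each $v\in V(G)$ by a fresh copy $M_{\delta,v}$ of $M_\delta$, rewiring the incoming and outgoing edges of $v$ into the input/output ports of $M_{\delta,v}$ through injective maps $\pi_{in,v},\pi_{out,v}$ exactly as before; this is possible since $\delta(v)\le\delta$, so $M_\delta$ has enough ports. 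The resulting $G'$ has $n\cdot|V(M_\delta)|$ nodes, is a DAG (lift a topological order of $G$ through the topological orders of the copies of $M_\delta$), and has indegree at most $\mathrm{indeg}(M_\delta)+1=O(1)$ by hypothesis.

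Next, given an arbitrary $S\subseteq V(G')$ with $|S|\le e/2$, I would reproduce the edge-charging argument of Claim~\ref{claim:irr-bound}. Setting $k_v=|S\cap V(M_{\delta,v})|$ (so $\sum_v k_v=|S|\le e/2$), for each $v$ with $k_v<\delta$ I apply $(k_v,\delta-k_v,d_\delta)$-ST-robustness of $M_\delta$ to the deleted set $S\cap V(M_{\delta,v})$ to obtain a surviving subgraph $H_v\subseteq M_{\delta,v}-S$ with at least $\delta-k_v\ge 1$ input ports and at least $\delta-k_v\ge 1$ output ports, every surviving input joined to every surviving output by a path of length $\ge d_\delta$; for $k_v\ge\delta$ I set $H_v=\emptyset$. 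I then mark $(u,w)\in E(G)$ irreparable precisely when the output port $\pi_{out,u}(w)$ is missing from $H_u$ or the input port $\pi_{in,w}(u)$ is missing from $H_w$. Charging each irreparable edge to an endpoint, metanode $M_{\delta,v}$ is responsible for at most $k_v$ missing (used) input ports and at most $k_v$ missing (used) output ports (and for at most $\mathrm{indeg}(v)+\mathrm{outdeg}(v)\le 2\delta\le 2k_v$ edges when $k_v\ge\delta$), so the set $S_{irr}$ has $|S_{irr}|\le\sum_v 2k_v=2|S|\le e$. Since $G$ is $(e,d)$-edge-depth robust, $G-S_{irr}$ contains a directed path $P=(v_0,v_1,\dots,v_d)$; note that every $v_i$ on $P$ has $k_{v_i}<\delta$, since a node with $H_v=\emptyset$ loses all incident edges in $G-S_{irr}$.

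The last step is to lift $P$ to a path $P'$ in $G'-S$ of length $\ge d\cdot d_\delta$, mirroring Claim~\ref{claim:paths} but tracking lengths. For each edge $(v_{i-1},v_i)$ of $P$, the external edge joining port $\pi_{out,v_{i-1}}(v_i)$ of $M_{\delta,v_{i-1}}$ to port $\pi_{in,v_i}(v_{i-1})$ of $M_{\delta,v_i}$ survives in $G'-S$, because $(v_{i-1},v_i)\notin S_{irr}$ forces both endpoints to lie in the respective $H$'s. Inside each metanode $M_{\delta,v_i}$ with $1\le i\le d$, the port $\pi_{in,v_i}(v_{i-1})$ is a surviving input of $H_{v_i}$, and $H_{v_i}$ has a surviving output (for $i<d$ it is $\pi_{out,v_i}(v_{i+1})\in H_{v_i}$ since $(v_i,v_{i+1})\notin S_{irr}$; for $i=d$ it is any surviving output, which exists because $k_{v_d}<\delta$), so the two are joined inside $H_{v_i}$ by a path of length $\ge d_\delta$. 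Concatenating the $d$ surviving external edges with these $d$ internal length-$\ge d_\delta$ segments (starting at $(v_0,\pi_{out,v_0}(v_1))$) yields a directed walk in $G'-S$; since $G'$ is a DAG this walk is a simple path, of length at least $d\cdot d_\delta+d\ge d\cdot d_\delta$. As $S$ was arbitrary with $|S|\le e/2$, this shows $G'$ is $(e/2,\,d\cdot d_\delta)$-depth robust.

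I expect the only place the bookkeeping is more delicate than in the proof of Theorem~\ref{general} to be precisely this endpoint accounting: one must ensure that each of the $d$ metanodes traversed by $P'$ really contributes a full $d_\delta$-length segment (in particular, handling the degenerate metanodes that lose all their input or output ports, which is why we argued that every genesis node on $P$ has $k_{v_i}<\delta$), so that the final count is $d\cdot d_\delta$ rather than $(d-1)\cdot d_\delta$. Everything else is a direct transcription of the argument already carried out for Theorem~\ref{general}.
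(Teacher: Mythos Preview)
Your proposal is correct and follows essentially the same approach as the paper's proof: replace every genesis node by a copy of $M_\delta$, reuse the edge-charging argument of Claim~\ref{claim:irr-bound} to bound $|S_{irr}|\le 2|S|$, and then lift the resulting path in $G-S_{irr}$ to $G'-S$, now exploiting the depth-$d_\delta$ guarantee in each metanode. Your write-up is in fact more careful than the paper's sketch about the endpoint bookkeeping (verifying $k_{v_i}<\delta$ along $P$ and handling the terminal metanode $M_{\delta,v_d}$), which is exactly the delicate spot the paper glosses over.
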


\begin{proof}(sketch)
  Instead of replacing each node $v \in G$ with a copy of $M_{\delta(v)}$, we instead replace each node with a copy of $M_{\delta, v} := M_\delta$, attaching the edges same way as in Construction \ref{con:reduce}. Thus the transformed graph $G'$ has $|V(G)|\times \left| M_{\delta}\right|$ nodes and constant indegree. Let $S \subset V(G')$ be a set of nodes that we will remove from $G'$. By Claim \ref{claim:paths}, there exists a path $P$ in $G' - S$ that passes through $d$ metanodes $M_{\delta, v_1},\ldots, M_{\delta, v_d}$. Since $M_\delta$ is maximally ST-robust with depth $d_\delta$ the sub-path $P_i = P \cap M_{\delta,v_i}$ through each metanode has length $|P_i| \geq d_\delta$. Thus, the total length of the path is at least $\sum_i |P_i| \geq d\cdot d_\delta$.
\end{proof}

\begin{corollary} (of Theorem \ref{general}) Let $\epsilon > 0$ be any fixed constant. Given any family $\{G_m\}_{m=1}^\infty$ of $(e_m, d_m)$-edge-depth-robust DAGs $G_m$ with $m$ nodes and maximum indegree $\delta_m$ then for some constants $c_1,c_2  > 0$ we can construct a family $\{H_m\}_{m=1}^\infty$ of DAGs such that each DAG $H_m$ is $(e_m/2,d_m \cdot \delta_m^{1-\epsilon})$-depth robust, $H_m$ has maximum indegree at most $c_2$ (constant) and at most $\left| V(H_m)\right| \leq c_1 m \delta_m$ nodes.
\end{corollary}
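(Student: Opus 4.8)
The plan is to obtain this as a direct consequence of Corollary~\ref{cor:DRone}. Applying that corollary with $\delta=\delta_m$ (we may assume $\max\{\mathrm{indeg}(v),\mathrm{outdeg}(v)\}=O(\delta_m)$ for all $v$, which holds for the edge-depth-robust families one actually plugs in, e.g.\ \cite{Sch83}), it suffices to produce, for every fixed $\epsilon>0$, a family $\mathbb{M}=\{M_k\}_{k=1}^\infty$ of constant-indegree DAGs with $|V(M_k)|=O(k)$ that are maximally ST-robust with depth $d_k=\Omega(k^{1-\epsilon})$. Given such a family, Corollary~\ref{cor:DRone} turns $G_m$ into a DAG $H_m$ with at most $c_1 m\delta_m$ nodes, maximum indegree $c_2$, and $(e_m/2,\;d_m\cdot d_{\delta_m})$-depth robustness; since $d_{\delta_m}=\Omega(\delta_m^{1-\epsilon})$ and $\epsilon>0$ was arbitrary, reabsorbing the hidden constant into the exponent yields the stated $(e_m/2,\;d_m\delta_m^{1-\epsilon})$-depth robustness, with $c_1,c_2$ the linear-size and indegree constants of $\mathbb{M}$.

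To construct $\mathbb{M}$, the idea is to re-run the construction underlying Theorem~\ref{theorem:STRobust} with a Schnitger-style linear-size, constant-indegree, $(\Theta(k),\Theta(k^{1-\epsilon}))$-depth-robust graph \cite{Sch83} playing the role of the ``connectivity core'' (in place of the bounded-depth object such a construction normally uses). Note one cannot simply demand that \emph{every} input--output path of $M_k$ have length $\ge k^{1-\epsilon}$: maximal ST-robustness forces the minimum input--output vertex cut to be at least $k$, hence $k$ vertex-disjoint input--output paths, so ``all paths long'' would already require $\Omega(k^{2-\epsilon})$ nodes. Instead we must exploit depth-robustness, where many long paths overlap heavily and collectively fit in $O(k)$ nodes while still being expensive to destroy. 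We attach the $k$ inputs and $k$ outputs to the core so that (i) after deleting any $k_1$ nodes, the argument of Theorem~\ref{theorem:STRobust} still produces $\ge k-k_1$ inputs each connected to $\ge k-k_1$ outputs (a depth-robust core is at least as well connected as the core used there, and each deletion still costs at most one surviving input and one surviving output), and (ii) each such surviving connection can be routed through one of the core's guaranteed-surviving length-$\ge k^{1-\epsilon}$ paths, hence has length $\ge k^{1-\epsilon}$. Linear size and constant indegree are inherited from Schnitger's graph plus the constant-overhead attachment.

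The main obstacle is step (ii): depth-robustness guarantees only that \emph{some} path of length $\ge k^{1-\epsilon}$ survives each deletion, with no control over its endpoints, whereas the ST-robust conclusion needs a long path for \emph{every} selected surviving input--output pair. The plan to handle this is to make the input- and output-attachments ``generic'' relative to the core: every surviving input should be able to reach, via the core's own robust connectivity, (a neighbourhood of) the start of whatever long core path survives a given deletion, and every surviving output should be reachable from (a neighbourhood of) its end, so that ``enter a surviving long core path, traverse it, exit'' is always an available route. Arranging this simultaneously with $O(k)$ total nodes, constant indegree, and the full maximal ST-robustness guarantee is the crux --- it is essentially a more careful, depth-tracking version of the Theorem~\ref{theorem:STRobust} construction --- and everything else is bookkeeping layered on top of Corollary~\ref{cor:DRone}.
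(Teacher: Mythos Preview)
Your approach is correct and matches the paper's: the corollary is obtained by plugging into \corref{DRone} a family of linear-size, constant-indegree, maximally ST-robust graphs with depth $d_k=k^{1-\epsilon}$, which the paper supplies explicitly (see Corollary~\ref{cor:max-std} and the corollary of Theorem~\ref{thm:cmax-str}). The ``crux'' you flag is resolved exactly as you anticipate: the core is Schnitger's $(cn,n^{1-\epsilon})$-depth-robust graph (three stacked copies with random inter-layer edges, as in Construction~\ref{ThreeGrates}), and the ``generic attachment'' of inputs and outputs is achieved by sandwiching this core between two linear-size $n$-superconcentrators (Construction~\ref{con:st}), so that every surviving input reaches the start of a surviving long core path and every surviving output is reachable from its end; Construction~\ref{optSTR} then upgrades $c$-maximal to maximal ST-robustness.
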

\begin{proof}(sketch) This follows immediately from \corref{DRone} and from our construction of a family $\mathbb{M}_{\epsilon} = \{M_{k,\epsilon}\}_{k=1}^\infty$ of max ST-Robust graphs with depth $d_k > k^{1-\epsilon}$ and constant indegree. 
\end{proof}

\begin{corollary} (of Theorem \ref{general})
  Let $\{ e_m \}_{m=1}^\infty$ and $\{ d_m \}_{m=1}^\infty$ be any sequence. If there exists a family $\{ G_m \}_{m=1}^\infty$ of $(e_m, d_m)$-edge-depth-robust graphs, where each DAG $G_m$ has $m$ edges, then there is a corresponding family $\{ H_n \}_{n=1}^\infty$ of constant indegree DAGs such that each $H_n$ has $n$ nodes and is $(\Omega(e_n), \Omega(d_n))$-depth-robust. 
\end{corollary}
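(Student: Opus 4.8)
(proof sketch / proposal)
The plan is to combine the two results already established --- Theorem~\ref{general} (the $\mathrm{Reduce}$ transformation) and Theorem~\ref{theorem:STRobust} (existence of linear-size, constant-indegree, maximally ST-robust gadgets) --- and then to correct the exact node count by padding with isolated vertices. First, for each $m$ I would assume without loss of generality that $G_m$ has no isolated vertices (these do not affect edge-depth-robustness), so $\delta(v)\geq 1$ for every $v\in V(G_m)$ and $m\leq \sum_{v}\delta(v)\leq 2m$. Applying Theorem~\ref{general} with the family $\mathbb{M}$ from Theorem~\ref{theorem:STRobust} gives $H'_m:=\mathrm{Reduce}(G_m,\mathbb{M})$, which is $(e_m/2,d_m)$-depth robust, has constant indegree, and has $N_m:=\sum_{v\in V(G_m)}\lvert V(M_{\delta(v)})\rvert$ nodes. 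Since each gadget satisfies $k\leq\lvert V(M_k)\rvert\leq ck$ for a universal constant $c$ (at least $k$ because it has $k$ inputs, at most $ck$ by Theorem~\ref{theorem:STRobust}), we get $m\leq N_m\leq 2cm$.

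Next I would fix the node count. Given a target $n$, let $m(n)$ be the largest $m$ with $N_m\leq n$; this is well defined once $n\geq N_1$, and the finitely many smaller values of $n$ can be dealt with by hand and absorbed into constants. Set $H_n$ equal to $H'_{m(n)}$ together with $n-N_{m(n)}$ isolated vertices. Then $H_n$ has exactly $n$ nodes and constant indegree (isolated vertices have indegree $0$), and it is still $(e_{m(n)}/2,d_{m(n)})$-depth robust: removing any $S\subseteq V(H_n)$ with $\lvert S\rvert\leq e_{m(n)}/2$ removes at most $e_{m(n)}/2$ vertices of $H'_{m(n)}$, and a length-$d_{m(n)}$ path surviving in $H'_{m(n)}-S$ avoids the padding vertices and hence also survives in $H_n-S$.

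Finally I would check the index correspondence $m(n)=\Theta(n)$, which is what turns $(e_{m(n)}/2,d_{m(n)})$ into $(\Omega(e_n),\Omega(d_n))$. From $N_{m(n)}\leq n$ we get $m(n)\leq n$, and from the maximality of $m(n)$ together with $N_{m(n)+1}\leq 2c(m(n)+1)$ we get $n<2c(m(n)+1)$; hence $m(n)\leq n<2c(m(n)+1)$, i.e. $n=\Theta(m(n))$. Therefore $H_n$ is $\bigl(e_{m(n)}/2,\,d_{m(n)}\bigr)$-depth robust with $m(n)=\Theta(n)$, which is the asserted $(\Omega(e_n),\Omega(d_n))$-depth robustness once we read $\Omega(\cdot)$ as absorbing the fixed constant-factor gap between the edge-count index $m$ and the node-count index $n$.

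The step I expect to be the real obstacle --- and the only content beyond citing Theorem~\ref{general} --- is pinning down the node count of $\mathrm{Reduce}(G_m,\mathbb{M})$ tightly: the lower bound $N_m\geq m$ is needed so that padding only ever \emph{adds} vertices, and the near-linear growth $N_{m+1}=O(N_m)$ is needed so that no long interval of target sizes $n$ is skipped and $m(n)=\Theta(n)$. Both are immediate from the two-sided bound $k\leq\lvert V(M_k)\rvert\leq ck$ supplied by Theorem~\ref{theorem:STRobust}, so once that is in hand the corollary is essentially a bookkeeping argument.
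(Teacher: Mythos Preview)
The paper states this corollary without proof, treating it as an immediate consequence of Theorem~\ref{general} and Theorem~\ref{theorem:STRobust}. Your proposal supplies exactly the natural argument the authors presumably have in mind: apply $\mathrm{Reduce}$, bound the node count two-sidedly via $k\leq |V(M_k)|\leq ck$, and pad to hit the target size $n$ exactly. That is correct and matches the intended route.

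One point worth flagging, which you already notice but perhaps undersell: the final inference from ``$H_n$ is $(e_{m(n)}/2,d_{m(n)})$-depth robust with $m(n)=\Theta(n)$'' to ``$H_n$ is $(\Omega(e_n),\Omega(d_n))$-depth robust'' is \emph{not} valid for arbitrary sequences $\{e_m\},\{d_m\}$. It requires a mild regularity assumption (e.g.\ $e_{\Theta(n)}=\Theta(e_n)$), which the corollary as stated does not include. This is an imprecision in the paper's statement rather than a flaw in your argument; your remark that the $\Omega(\cdot)$ is meant to ``absorb the fixed constant-factor gap between the edge-count index $m$ and the node-count index $n$'' is the right reading. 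If you want to be fully rigorous you could simply state the conclusion as ``$H_n$ is $(e_{m(n)}/2,d_{m(n)})$-depth robust for some $m(n)=\Theta(n)$,'' which is what the argument actually proves.
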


The original Grate's construction \cite{Sch83}, $G$, has $N = 2^n$ nodes and $m = n2^n$ edges and for any $s \leq n$, and is $(s2^n, \frac{N}{\sum^s_{j=0}\binom{n}{j}})$-edge-depth-robust. For node depth-robustness we only had matching constructions when $s=O(1)$~\cite{EC:AlwBloPie17,CCS:AlwBloHar17} and $s= \Omega( \log N)$~\cite{Sch83} --- no comparable lower bounds were known for intermediate $s$. 

\begin{corollary} (of Theorem \ref{general})
There is a family of constant indegree graphs $\{G_n\}$ such that $G_n$ has $O\left(N=2^n\right)$ nodes and $G_n$ is $(sN/(2n), \frac{N}{\sum^s_{j=0}\binom{n}{j}})$-edge-depth-robust for any $1 \leq s \leq \log n$
\end{corollary}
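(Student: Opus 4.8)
The plan is to obtain $G_n$ by applying the $\mathrm{Reduce}$ transformation of Theorem~\ref{general} to the edge-depth-robust ``Grate'' graph of Schnitger~\cite{Sch83}. Recall (as quoted just above the corollary) that for each $n$ Grate's construction $G^{(n)}$ has $N=2^n$ nodes, $m=n2^n$ edges, indegree $O(n)$, and is $(s2^n,\frac{N}{\sum_{j=0}^s\binom nj})$-edge-depth-robust for every $1\le s\le n$. By Theorem~\ref{theorem:STRobust} there is a family $\mathbb M$ of constant-indegree, linear-size, maximally ST-robust graphs, so $\mathrm{Reduce}(\cdot,\mathbb M)$ is well defined. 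Taking $G_n:=\mathrm{Reduce}(G^{(n)},\mathbb M)$ and invoking Theorem~\ref{general} with $e=s2^n$ and $d=\frac{N}{\sum_{j=0}^s\binom nj}$ already yields a constant-indegree $(s2^n/2,\frac{N}{\sum_{j=0}^s\binom nj})$-depth-robust graph whose node count is $\sum_{v}|V(M_{\delta(v)})|=O(\sum_v\delta(v))=O(m)=O(n2^n)$, using that each metanode has $O(\delta(v))$ nodes.

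To reach the sharper bookkeeping in the statement --- node count $O(N)$ and first parameter of the form $sN/(2n)$ rather than $sN/2$ --- the one additional move is to feed $\mathrm{Reduce}$ a \emph{rescaled} copy of Grate's graph: instantiate Grate's construction with $n':=n-\lceil\log n\rceil$ in place of $n$, so the input edge-depth-robust graph has $2^{n'}=\Theta(N/n)$ nodes, hence only $n'2^{n'}=\Theta(N)$ edges and indegree $O(n')$. Then $\mathrm{Reduce}$ outputs a constant-indegree DAG on $O(N)$ nodes, and it remains to rewrite the transported parameters $(s2^{n'}/2,\frac{2^{n'}}{\sum_{j=0}^s\binom{n'}{j}})$ in terms of $N$ and $n$. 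Here one uses $2^{n'}=\Theta(N/n)$ together with the elementary estimate $\binom{n'}{j}=\Theta(\binom nj)$ for all $j\le s$ whenever $s\le\log n$ --- this is exactly where the hypothesis $1\le s\le\log n$ is needed --- so that $\sum_{j=0}^s\binom{n'}{j}=\Theta(\sum_{j=0}^s\binom nj)$; substituting these estimates gives depth-robustness parameters of the form asserted in the corollary, with the $e$-halving of Theorem~\ref{general} turning $s2^{n'}=\Theta(sN/n)$ into $sN/(2n)$.

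I expect the friction in writing this out to be purely bookkeeping rather than conceptual. All the graph-theoretic content is already packaged: the edge-to-node conversion is Theorem~\ref{general}, and the existence of the ST-robust building blocks is Theorem~\ref{theorem:STRobust}. The two things that actually need care are (i) faithfully transcribing Schnitger's original statement from \cite{Sch83} into the $(s2^n,\cdot)$ normal form used here, and (ii) verifying that passing from $n$ to $n'=n-\lceil\log n\rceil$ loses nothing asymptotically in the depth parameter --- this is precisely the binomial-sum comparison above, and is the reason the corollary is restricted to $s\le\log n$. No step introduces a new idea beyond those already in the paper.
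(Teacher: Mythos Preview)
Your first paragraph is exactly what the paper does: feed Schnitger's Grates graph (with $N=2^n$ nodes and $m=n2^n$ edges) into $\mathrm{Reduce}$ and read off the parameters from Theorem~\ref{general}. The paper gives no proof of the corollary beyond that one-line application.

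Your second paragraph, however, is not in the paper and is not needed. The corollary's stated node count ``$O(N=2^n)$'' is simply imprecise---the sentence immediately following the corollary confirms that the transformed graph has $N'=O(n2^n)$ nodes, so no rescaling of Grates is intended. Moreover, your rescaling does not actually land on the stated depth parameter: you correctly argue that $\sum_{j\le s}\binom{n'}{j}=\Theta\bigl(\sum_{j\le s}\binom{n}{j}\bigr)$ for $s\le\log n$, but since $2^{n'}=\Theta(N/n)$ rather than $\Theta(N)$, the depth you obtain is $\Theta\bigl(\frac{N}{n\,\sum_{j\le s}\binom{n}{j}}\bigr)$, a factor of $n$ short of the printed $\frac{N}{\sum_{j\le s}\binom{n}{j}}$. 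That mismatch is the corollary's loose bookkeeping, not a flaw in your reasoning; the substantive argument is already complete in your first paragraph.
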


In particular, setting $s = \log \log n$ and applying the indegree reduction from Theorem \ref{general}, we see that the transformed graph $G'$ has constant indegree,  $N' = O(n2^n)$ nodes, and is $(\frac{N' \log \log N'}{\log N'}, \frac{N'}{(\log N') ^{1 + \log \log N'}})$-depth-robust. Blocki et al. \cite{C:BHKLXZ19} showed that if there exists a node depth robust graph with $e = \Omega(N \log \log N / \log N)$ and $d = \Omega(N \log \log N / \log N)$ then one can obtain another constant indegree graph with pebbling cost $\Omega(N^2 \log \log N / \log N)$ which is optimal for constant indegree graphs. We conjecture that the graphs in \cite{EGS75} are sufficiently edge depth robust to meet these bounds after being transformed by our reduction.

\section{ST Robustness}

In this section we show how to construct maximally ST-robust graphs with constant indegree and linear size. We first introduce some of the technical building blocks used in our construction including  superconcentrators~\cite{Valiant:1976:GPC:1739937.1740084,Pippenger77,GabGal81} and grates~\cite{Sch83}. Using these building blocks we then provide a randomized construction of a $c_1$-maximally ST-robust DAG with linear size and constant indegree for some constant $c_1 >0$  --- sampled graphs are  $c_1$-maximally ST-robust DAG with high probability. Finally, we use $c_1$-maximally ST-robust DAGs to construct a family of maximally ST-robust graphs with linear size and constant indegree. 

\subsection{Technical Ingredients} 

We now introduce other graph properties that will be useful for constructing ST-robust graphs. 

\subsubsection*{Grates}
A DAG $G = (V,E)$ with $n$ inputs $I$ and $n$ outputs $O$ is called a $(c_0,c_1)$-grate if for any subset $S \subset V$ of size $|S| \leq c_0n$ at least $c_1 n^2$ input output pairs $(x,y) \in I \times O$ remain connected by a directed path from $x$ to $y$ in $G-S$. Schnitger \cite{Sch83} showed how to construct $(c_0,c_1)$-grates with $O(n)$ nodes and constant indegree for suitable constants $c_0,c_1>0$.   The notion of an maximally ST-robust graph is a strictly stronger requirement since there is no requirement on which pairs are connected. However, we show that a slight modification of Schnitger's \cite{Sch83} construction is a $(cn,n/2)$-ST-robust for a suitable constant $c$. We then transform this graph into a $c_1$-maximally ST-robust graph by sandwiching it in between two superconcentrators. Finally, we show how to use several $c_1$-maximally ST-robust graphs to construct a maximally ST-robust graph.

\subsubsection*{Superconcentrators}
We say that a directed acyclic graph $G = (V, E)$ with $n$ input vertices and $n$ output vertices is an \textbf{\textit{n}-superconcentrator} if for any $r$ inputs and any $r$ outputs, $1 \leq r \leq n$, there are $r$ vertex-disjoint paths in $G$ connecting the set of these $r$ inputs to these $r$ outputs. We note that there exists linear size, constant indegree superconcentrators~\cite{Valiant:1976:GPC:1739937.1740084,Pippenger77,GabGal81} and we use this fact throughout the rest of the paper. For example, Pippenger \cite{Pippenger77} constructed an  \textit{n}-superconcentrator with at most $41n$ vertices and indegree at most $16$. 

\subsubsection*{Connectors}
We say that  an $n$-superconcentrator is an  \textbf{\textit{n}-connector} if it is possible to specify which input is to be connected to which output by vertex disjoint paths in the subsets of $r$ inputs and $r$ outputs. Connectors and superconcentrators are potential candidates for ST-robust graphs because of their highly connective properties. In fact, we can prove that any  connectors  \textbf{\textit{n}-connector} is maximally ST-robust --- the proof of Theorem \ref{connector-thm} can be found in the appendix. While we have constructions of \textbf{\textit{n}-connector} graphs these graphs have  $O(n \log n)$ vertices and indegree of 2, an information theoretic technique of Shannon\cite{shannon} can be used to prove that any $n$-connector with constant indegree requires {\em at least} $\Omega(n \log n)$ vertices --- see discussion in the appendix. Thus, we cannot use $n$-connectors to build linear sized ST-robust graphs. However, Shannon's information theoretic argument does not rule out the existence of linear size ST-robust graphs.

\newcommand{\connectorIsSTRobust}{If $G$ is an $n$-connector, then $G$ is $(k, n-k)$-ST-robust, for all $1 \leq k \leq n$.}
\begin{theorem}
  \label{connector-thm}
\connectorIsSTRobust
\end{theorem}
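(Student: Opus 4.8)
The plan is to show directly from the definition of an $n$-connector that it satisfies the $(k, n-k)$-ST-robustness condition. Fix any set $D \subseteq V(G)$ with $|D| \leq k$. We must exhibit a subgraph $H$ of $G - D$ containing at least $n - k$ inputs and at least $n - k$ outputs such that every retained input is connected to every retained output in $H$. First I would discard from the input set $I$ any input that was deleted (i.e. lies in $D$), and likewise discard from $O$ any deleted output. Since $|D| \leq k$, at most $k$ inputs and at most $k$ outputs are removed this way, so we are left with sets $I' \subseteq I \setminus D$ and $O' \subseteq O \setminus D$ with $|I'| \geq n - k$ and $|O'| \geq n - k$. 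By trimming further if necessary we may assume $|I'| = |O'| = r$ for some $r \geq n - k$ (if we want exactly $n-k$, take $r = n-k$; keeping $r$ as large as possible only helps). The claim is that $H := G - D$ restricted to the vertices reachable on the relevant connector paths works.

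The key step is to invoke the connector property for each pair $(s, t) \in I' \times O'$ individually, but carefully, so that the path avoids $D$. Here is the trick: fix any input $s \in I'$ and any output $t \in O'$. Consider the set of $r$ inputs $I'$ and the set of $r$ outputs $O'$. Since $G$ is an $n$-connector, for \emph{any} bijection $\sigma : I' \to O'$ there exist $r$ vertex-disjoint paths realizing $\sigma$ — in particular I would choose a bijection $\sigma$ with $\sigma(s) = t$. This gives $r$ vertex-disjoint paths $P_1, \ldots, P_r$ connecting $I'$ to $O'$ according to $\sigma$, one of which, say $P_j$, runs from $s$ to $t$. The crucial observation is that these $r$ paths are vertex-disjoint and collectively touch exactly $r$ inputs (all in $I'$, hence none in $D$) and $r$ outputs (all in $O'$, hence none in $D$); however, their internal vertices could still lie in $D$. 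So I need a counting/disjointness argument: because the $r$ paths are vertex-disjoint, the at most $k$ deleted vertices in $D$ can intersect at most $k$ of these $r$ paths. Hence at least $r - k \geq n - 2k$ of the paths avoid $D$ entirely. This does not immediately isolate the specific path $P_j$ from $s$ to $t$, which is the subtlety.

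To pin down a path from $s$ to $t$ that avoids $D$, I would instead argue as follows. Consider just the two-element request: route $s$ to $t$ together with one auxiliary pair. Actually the cleanest route: for the single pair $(s,t)$, apply the connector (or even just superconcentrator) property with $r$ chosen adaptively — but to guarantee the path from $s$ to $t$ misses $D$, use vertex-disjointness as leverage. Specifically, augment the request: we want to connect the $k+1$ inputs $\{s\} \cup D_I^+$ to $k+1$ outputs $\{t\} \cup D_O^+$ where $D_I^+, D_O^+$ are chosen to be $k$ further inputs/outputs not in $D$ (possible since only $k$ inputs and $k$ outputs were deleted, and assuming $n$ is large enough that $2k < n$; the boundary case $k$ near $n$ is handled separately since then $n-k$ is small). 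By the connector property there are $k+1$ vertex-disjoint paths realizing this, one from $s$ to $t$. These $k+1$ paths are pairwise vertex-disjoint, so the $\leq k$ vertices of $D$ can hit at most $k$ of them, leaving at least one path untouched — and if the untouched path is the $s$-to-$t$ one we are done, but a priori it might be a different one. The fix is to make the $s$-to-$t$ path special: repeat the argument, or observe that since we get to \emph{choose} which of the paths is ruined, we cannot. So the honest approach is: there are $k+1$ vertex-disjoint $s$-$t$-type paths only if we allow the endpoints to vary, which we don't.

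The genuinely correct and simple argument, which I would ultimately write up, is this: to connect $s$ to $t$ avoiding $D$, note that $|D| \leq k \leq n - 1$, and consider the $n$-connector's ability to route $s$ to $t$ via \emph{many} internally-disjoint paths is not guaranteed by one connector call; instead use the superconcentrator-flavored counting. Take $r = k+1$ inputs including $s$ and $r = k+1$ outputs including $t$, all chosen outside $D$ (valid when $k + 1 \leq n - k$, i.e. for the regime where $n-k$ is the binding constraint; when $k$ is large the statement $(k,n-k)$ with small $n-k$ can be obtained by monotonicity from a smaller $k$). By the connector property, for the identity-like bijection sending $s \mapsto t$ and the other $k$ inputs to the other $k$ outputs arbitrarily, we get $k+1$ vertex-disjoint paths. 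Now here's the point I missed: we do this for \emph{every} bijection; equivalently, the connector guarantees the $s$-$t$ path in a system of $k+1$ vertex-disjoint paths \emph{for each choice}, and across $k+1$ disjoint choices of the companion sets we can force $s$-$t$ paths that are pairwise disjoint except at $s,t$. Hence $s$ and $t$ are joined by $k+1$ internally-vertex-disjoint paths, so $D$ (size $\leq k$) cannot kill all of them, and $s,t$ remain connected in $G - D$. Since $s \in I'$ and $t \in O'$ were arbitrary, setting $H = G - D$ (or the union of the surviving paths over all such pairs) completes the proof, with $|I \cap V(H)| \geq n - k$ and $|O \cap V(H)| \geq n-k$.

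The main obstacle, as the discussion above shows, is exactly this: establishing that in an $n$-connector any input $s$ and any output $t$ are joined by $k+1$ internally vertex-disjoint paths (so that deleting $k$ vertices cannot disconnect them). I expect this to follow cleanly from applying the connector (or superconcentrator) property to $k+1$ disjoint request-sets, each containing the pair $(s,t)$, but the bookkeeping to make the resulting $s$-$t$ paths internally disjoint — and to handle the boundary regime where $k$ is close to $n$ via monotonicity of the ST-robustness parameter — is the part that needs care. Everything else (the counting that $\leq k$ deletions miss $\geq n-k$ inputs/outputs, and that a single surviving path per pair suffices) is routine.
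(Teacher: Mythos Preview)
Your argument has a genuine gap. The final approach hinges on the claim that in an $n$-connector, any fixed input $s$ and output $t$ are joined by $k+1$ internally vertex-disjoint paths; this is false in general, and the proposed justification (``apply the connector property to $k+1$ disjoint request-sets, each containing the pair $(s,t)$'') does not produce internally disjoint $s$--$t$ paths, since different connector calls are under no obligation to return paths that avoid each other. A concrete counterexample with $n=2$: take inputs $s_1,s_2$, outputs $t_1,t_2$, internal nodes $m_1,m_2$, and edges $s_1\to t_1$, $s_2\to t_2$, $s_1\to m_1\to t_2$, $s_2\to m_2\to t_1$. This is a $2$-connector (the identity routing uses the direct edges, the swap uses the $m_i$'s), yet the \emph{only} $s_1$--$t_2$ path goes through $m_1$. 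Deleting the single internal node $m_1$ disconnects $s_1$ from $t_2$, so there is no second internally disjoint $s_1$--$t_2$ path.

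The deeper issue is that you are attempting to prove the stronger statement that \emph{all} of $I\setminus D$ is connected to \emph{all} of $O\setminus D$ in $G-D$; the example above shows this is simply not true for connectors. The definition of $(k,n-k)$-ST-robustness only asks for \emph{some} $n-k$ inputs and $n-k$ outputs that are mutually connected, and the paper's proof uses exactly this freedom. It collects the ``bad'' pairs $(s,t)$ (disconnected in $G-D$, or with an endpoint in $D$), greedily extracts a maximal matching $B$ among them, and then applies the connector property \emph{once} to this matching to get $|B|$ vertex-disjoint paths; since each such path must meet $D$, the matching has size at most $k$. Discarding the (at most $k$) inputs and (at most $k$) outputs of $B$ then leaves an all-pairs-connected subgraph, because any surviving bad pair would contradict maximality of $B$. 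The point you are missing is that one must be willing to throw away some inputs and outputs \emph{beyond} those in $D$.
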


\subsection{Linear Size ST-robust Graphs}
ST-robust graphs have similar connective properties to connectors, so a natural question to ask is whether ST-robust graphs with constant indegree require $\Omega(n \log n)$ vertices. In this section, we show that linear size ST-robust graphs exist by showing that a modified version of the Grates construction \cite{Sch83} becomes $c$-maximally ST-robust when sandwiched between two superconcentrators for some constant $c$.

In the proof of Theorem A in \cite{Sch83}, Schnitger constructs a family of DAGs $(H_n | n \in N)$ with constant indegree $\delta_H$, where $n$ is the number of nodes and $H_n$ is $(cn, n^{2/3})$-depth-robust, for suitable constant $c > 0$. We construct a similar graph $G_n$ as follows:

\begin{construction}[$G_n$] \label{ThreeGrates}
 We begin with $H^1_n$, $H_n^2$ and $H_n^3$, three isomorphic copies of $H_n$ with disjoint vertex sets $V_1$, $V_2$ and $V_3$. For each top vertex $v \in V_3$ sample $\tau$ vertices $x_1^v,\ldots,x_\tau^v$ independently and uniformly at random from $V_2$ and for each $i \leq \tau$ add each directed edge $(x_i^v,v)$ to $G_n$ to connect  each of these sampled nodes to $v$. Similarly, for each node vertex $u \in V_2$ sample $\tau$ vertices $x_1^u,\ldots,x_\tau^u$ from $V_1$ independently and uniformly at random and add each directed edge $(x_i^u,u)$ to $G_n$. Note that $\indeg(G_n) \leq \indeg(H_n)+\tau$.
\end{construction}
Schnitger's construction only utilizes two isomorphic copies of $H_n$ and the edges connecting $H^1_n$ and $H^2_n$ a sampled by picking $\tau$ random permutations. In our case the analysis is greatly simplified by picking the edges uniformly and we will need three layers to prove ST-robustness. We will use the following lemma from the Grates paper as a building block. A proof of Lemma \ref{paths-lemma} is included in the appendix  for completeness.

\newcommand{\pathsLemma}{ For some suitable constant $c>0$ any any subset $S$ of $cn/2$ vertices of $G_n$ the graph $H_n^1 - S$ contains $k = cn^{1/3}/2$ vertex disjoint paths $A_1, \ldots , A_k$ of length $n^{2/3}$ and $H_n^2 - S$ contains $k$ vertex disjoint paths $B_1, \ldots, B_k$ of the same length.}

\begin{lemma} \cite{Sch83}
  \label{paths-lemma} \pathsLemma
\end{lemma}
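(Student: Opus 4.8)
The plan is to extract the paths greedily, applying the depth robustness of $H_n$ once per path. Recall from Schnitger's construction that $H_n$ has constant indegree and is $(cn, n^{2/3})$-depth-robust for a suitable constant $c>0$, and that $H_n^1$ and $H_n^2$ are vertex-disjoint isomorphic copies of $H_n$ sitting inside $G_n$. Since $|S \cap V_1| \le |S| \le cn/2$ (and likewise for $V_2$), and since deleting $S$ from $H_n^i$ only deletes the vertices $S \cap V_i$, it suffices to prove the statement for a single copy; I would argue for $H_n^1$, transporting $S \cap V_1$ through the isomorphism and working inside $H_n$ itself, and then run the identical argument in $H_n^2$.

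Next I would build the paths $A_1, A_2, \dots$ inductively. Suppose $A_1, \dots, A_{i-1}$ have been constructed, are pairwise vertex-disjoint, avoid $S$, and each is a directed path of length $n^{2/3}$. Put $T_i := (S\cap V_1) \cup V(A_1) \cup \dots \cup V(A_{i-1})$, so that $|T_i| \le cn/2 + (i-1)\cdot|V(A_1)|$, which stays below the depth-robustness threshold $cn$ of $H_n$ for every $i \le k = cn^{1/3}/2$. Since $|T_i|$ is below that threshold, $H_n - T_i$ contains a directed path of length at least $n^{2/3}$, and I would truncate it to obtain $A_i$; by construction $A_i$ is vertex-disjoint from $A_1,\dots,A_{i-1}$ and avoids $S$, which closes the induction and yields the desired $k$ vertex-disjoint length-$n^{2/3}$ paths inside $H_n^1 - S$. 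The argument inside $H_n^2$ is word-for-word the same.

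I do not anticipate a genuine obstacle here; this is the standard ``iterate depth robustness, subtracting the vertices used so far'' argument. The one point requiring care is the arithmetic that keeps the running total $|T_i|$ strictly below $cn$ at every step: this is exactly what pins down the choice $k = cn^{1/3}/2$ against the removal budget $|S| = cn/2$, since $k$ paths of ``length'' $n^{2/3}$ then account for $cn/2$ vertices, leaving just enough room for $S$. If one insists on counting a length-$\ell$ path as $\ell+1$ vertices rather than $\ell$, the same argument goes through after shrinking the lemma's constant by a bounded factor; either way the bookkeeping is the only subtlety, and everything else (the isomorphism transport between $G_n$, $H_n^1$, $H_n^2$ and $H_n$, and truncating ``length $\ge n^{2/3}$'' down to ``length $n^{2/3}$'') is routine.
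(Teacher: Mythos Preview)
Your proposal is correct and is essentially the same greedy argument as the paper's own proof: iteratively apply $(cn,n^{2/3})$-depth-robustness of $H_n$, removing the vertices of each extracted path before finding the next, and observe that the cumulative number of removed vertices stays below $cn$ through all $k=cn^{1/3}/2$ rounds. You have simply been more explicit about the bookkeeping (the running total $|T_i|$, the truncation, and the $\ell$ vs.\ $\ell{+}1$ vertex count) than the paper, which dispatches the whole thing in three sentences.
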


We use Lemma   \ref{paths-lemma} to help establish our main technical Lemma   \ref{prob}. We sketch the proof of Lemma \ref{prob} below. A formal proof can be found in Appendix \ref{apdx:MissingProof}.

\newcommand{\thmProb}{
Let $G_n$ be defined as in Construction \ref{ThreeGrates}. Then for some constants $c > 0$, with high probability $G_n$ has the property that for all $S \subset V(G_n)$ with $|S| = cn/2$ there exists $A \subseteq V(H_n^1)$ and $B\subseteq V(H_n^3)$ such that for every pair of nodes $u \in A$ and $v \in B$ the graph $G_n-S$ contains a path from $u$ to $v$ and $|A|,|B| \geq 9c n/40$.
}

\begin{lemma}
  \label{prob}
  \thmProb
\end{lemma}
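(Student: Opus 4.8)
\medskip
\noindent\textbf{Proof plan.}
The plan is to fix an arbitrary deletion set $S\subseteq V(G_n)$ with $|S|=cn/2$, show that the random edges of $G_n$ fail to produce an admissible pair $(A,B)$ with probability at most $\exp(-\Omega(\tau n^{2/3}))$, and then union bound over all $\binom{3n}{cn/2}=\exp(O(cn\log(1/c)))$ choices of $S$; taking the constant $c$ small enough and the sampling parameter $\tau$ large enough then makes the overall failure probability $o(1)$.

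Fix $S$ and write $S_\ell=S\cap V_\ell$, so $|S_\ell|\le cn/2<cn$ for $\ell\in\{1,2,3\}$. By Lemma~\ref{paths-lemma}, $H_n^1-S$ has $k=cn^{1/3}/2$ vertex-disjoint paths $A_1,\dots,A_k$ of length $n^{2/3}$ and $H_n^2-S$ has $k$ vertex-disjoint paths $B_1,\dots,B_k$ of length $n^{2/3}$. Since $H_n^3$ is an isomorphic copy of the $(cn,n^{2/3})$-depth-robust graph $H_n$ and $|S_3|<cn$, the same argument (repeatedly extract a length-$n^{2/3}$ path and delete its vertices; this remains possible for $k$ rounds since at most $cn/2$ extra vertices are removed) gives vertex-disjoint paths $C_1,\dots,C_k$ of length $n^{2/3}$ in $H_n^3-S$. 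Split each of these $3k$ paths into a first half and a second half, each with at least $n^{2/3}/2-1$ vertices, written $A_i',A_i''$, $B_j',B_j''$, $C_\ell',C_\ell''$, and set $A=\bigcup_i A_i'$, $B=\bigcup_\ell C_\ell''$. These are disjoint unions, so $|A|=|B|=k\lceil n^{2/3}/2\rceil\ge cn/4\ge 9cn/40$; the slack is harmless and absorbs any minor losses (e.g.\ discarding a $1/10$-fraction of the paths).

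To see $u\in A_i'$ reaches $v\in C_\ell''$ in $G_n-S$: since $u$ reaches all of $A_i''$ along $A_i$ and $v$ is reached from all of $C_\ell'$ along $C_\ell$, it is enough to find a single $j$ with a sampled edge from $A_i''$ into $B_j'$ and a sampled edge from $B_j''$ into $C_\ell'$ --- then $u\rightsquigarrow A_i''\to B_j'\rightsquigarrow B_j''\to C_\ell'\rightsquigarrow v$ is a path in $G_n-S$ (each ``$\rightsquigarrow$'' runs forward along a surviving path, and routing every jump out of a second half into a first half rules out backtracking). This depends only on $(i,\ell)$, so it suffices to secure it for the $k^2$ pairs $(i,\ell)$. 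For fixed $i,j$ the event that no vertex of $B_j'$ samples a vertex of $A_i''$ has probability $(1-|A_i''|/n)^{\tau|B_j'|}\le\exp(-\Omega(\tau n^{1/3}))$, since $|A_i''|,|B_j'|=\Theta(n^{2/3})$ and each vertex of $B_j'$ draws $\tau$ independent uniform samples from $V_1$. Because $B_1,\dots,B_k$ are vertex-disjoint these $k$ events are independent over $j$, so by a Chernoff bound the set $J$ of $j$ with a sampled edge $A_i''\to B_j'$ has $|J|\ge k/2$ except with probability $\exp(-\Omega(\tau n^{2/3}))$ (here $k=\Theta(n^{1/3})$). On this event, failing to find a good $j$ would force no vertex of $C_\ell'$ to sample into $\bigcup_{j\in J}B_j''$, a set of $\Omega(n)$ vertices, which has probability $(1-\Omega(1))^{\tau\cdot\Theta(n^{2/3})}=\exp(-\Omega(\tau n^{2/3}))$. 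Hence the failure probability for each $(i,\ell)$ is $\exp(-\Omega(\tau n^{2/3}))$; union bounding over the $\le k^2$ pairs and the $\exp(O(cn\log(1/c)))$ sets $S$ gives total failure probability $\exp(O(cn\log(1/c))-\Omega(\tau n^{2/3}))=o(1)$ for $\tau$ large enough, and when no pair fails the output $(A,B)$ is as required.

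\medskip
\noindent\textbf{Main obstacle.}
The delicate point is the final union bound over all $\binom{3n}{cn/2}$ deletion sets: it forces the per-$S$ failure probability down to $\exp(-\Omega(n))$, so a first-moment estimate alone --- ``the expected number of $A_i$--$B_j$ jump edges is large'' --- does not suffice. What rescues the argument is amplification from the \emph{independence} of the samples entering the vertex-disjoint middle paths $B_j$ (so that a Chernoff bound makes a constant fraction of them reachable from $A_i$), together with a careful balancing of $c$, which controls $|S|$ and hence the size of the union bound, against $\tau$, which controls the tail. The remaining steps --- pulling disjoint paths out of depth-robustness, the half/half bookkeeping for the jumps, and the pigeonhole for a common middle path --- are routine once this tension is resolved.
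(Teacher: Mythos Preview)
Your final union bound does not close. You establish a per-$S$ failure probability of $\exp(-\Omega(\tau n^{2/3}))$ and then union bound over $\binom{3n}{cn/2}=\exp(\Theta(cn\log(1/c)))$ sets $S$, concluding ``$\exp(O(cn\log(1/c))-\Omega(\tau n^{2/3}))=o(1)$ for $\tau$ large enough.'' But for any fixed constants $c,\tau>0$ the first term is $\Theta(n)$ and the second is $\Theta(n^{2/3})$, so the expression tends to $+\infty$, not $0$. The constant $c$ is pinned down by the depth-robustness of $H_n$ (via Lemma~\ref{paths-lemma}) and cannot be driven to $0$, and no constant $\tau$ can make a $\Theta(n^{2/3})$ exponent beat a $\Theta(n)$ one. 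This is exactly the ``main obstacle'' you flag, and your argument does not in fact resolve it.

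The missing idea --- which the paper uses --- is a second level of amplification coming from independence across the $k=\Theta(n^{1/3})$ outer-layer paths. You set $A=\bigcup_i A_i'$ and $B=\bigcup_\ell C_\ell''$ and demand that \emph{every} pair $(i,\ell)$ succeed; this is what caps you at $\exp(-\Omega(\tau n^{2/3}))$. Instead, declare an index $\ell$ ``bad'' if the desired middle-layer routing fails (e.g.\ if $\ge k/10$ of the $U^2_j$'s are disconnected from $C_\ell'$), and likewise for $i$. Since the $\tau$ random parents sampled by vertices of the vertex-disjoint $C_\ell'$'s are independent across $\ell$, the per-$\ell$ bad events are independent, and one gets $\Pr[\ge k/10\text{ bad }\ell]\le\binom{k}{k/10}\bigl(\exp(-\Omega(\tau n^{2/3}))\bigr)^{k/10}=\exp(-\Omega(\tau n))$. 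Now the union bound over $\exp(O(n))$ sets $S$ goes through for a large enough constant $\tau$. Discarding the $\le k/10$ bad $\ell$'s (and symmetrically the bad $i$'s) costs only a $1/10$-fraction of the paths --- which is exactly the slack $9cn/40$ versus $cn/4$ built into the lemma's conclusion, and which you noted but never actually spent.
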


\begin{proof}
 (Sketch) Fixing any $S$ we can apply Lemma \ref{paths-lemma} to find $k:=cn^{1/3}/2$ vertex disjoint paths $P_{1,S}^i, \ldots, P_{k,S}^i$ in  $H_n^i$ of length $n^{2/3}$ for each $i \leq 3$. Here, $c$ is the constant from Lemma \ref{paths-lemma}.  Let $U^i_{j,S}$ be the upper half of the $j$-th path in $H_n^i$ and $L_{j,S}^i$ be the lower half and define the event $BAD_{i,S}^{upper}$ to be the event that there exists at least $k/10$ indices $j \leq k$ s.t., $U^2_{j,S}$ is disconnected from $L^3_{i,S}$. We construct $B$ by taking the union of all of upper paths $U^3_{i,S}$  in $H_n^3$ for each non-bad (upper) indices $i$. Similarly, we define $BAD_{i,S}^{lower}$ to be the event that there exists at least $k/10$ indices $j \leq k$ s.t. $U^1_{i,S}$ is disconnected from $L^2_{j,S}$ and we construct $A$ be taking the union of all of the lower paths $L^1_{i,S}$ in $H_n^1$ for each non-bad (lower) indices $i$. We can now argue that any pair of nodes $u \in A$ and $v \in B$ is connected by invoking the pigeonhole principle i.e., if $u \in L^1_{i,S}$ and $v \in U^3_{i',S}$ for good indices $i$ and $i'$ then there exists some path $P_j^2$ in the middle layer $H_n^2$ which can be used to connect $u$ to $v$. We still need to argue that $|A|,|B| \geq c n/3$ for some constant $c$. To lower bound $|B|$  we introduce the event    $ BAD_{S} = |\{i\ :\ BAD_{i,S}^{upper}\}| > \frac{k}{10}$ and note that  unless $BAD_S$ occurs we have $|B| \geq (9k/10) n^{2/3}/2 = 9cn/40$. Finally, we show that $\P[ BAD_{S}]$ is very small and then use union bounds to show that, for a suitable constant $\tau$, the probability $\P[\exists S BAD_{S}]$ becomes negligibly small. A symmetric argument can be used to show that $|A| \geq 9cn/40$. 
\end{proof}

We now use $G_n$ to construct $c$-maximally ST-robust graphs with linear size.

\begin{construction}[$M_n$] \label{con:st}
  We construct the family of graphs $M_n$ as follows: Let the graphs $SC^1_n$ and $SC^2_n$ be linear sized $n$-superconcentrators  with constant indegree $\delta_{SC}$ \cite{Pippenger77}, and let $H^1_n$, $H^2_n$ and $H^3_n$ be defined and connected as in $G_n$,  where every output of $SC^1_n$ is connected to a node in $H^1_n$, every node of $H^3_n$ is connected to an input of $SC^2_n$. 
\end{construction}

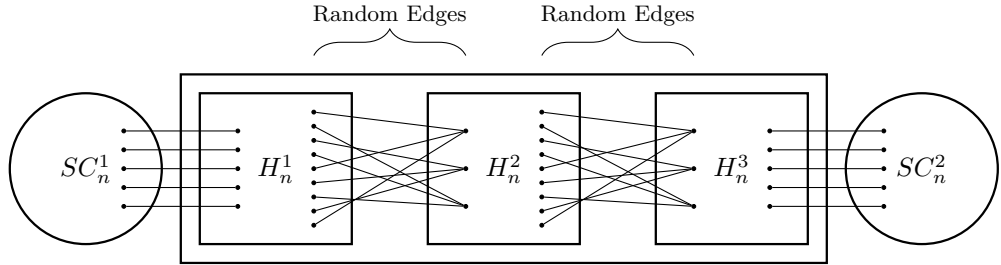
\begin{figure}[h!]
  \centering
  \label{str}
  \begin{tikzpicture}
    [place/.style={circle, draw=black, fill=black,inner sep=0pt, minimum size=.5mm}]

    \draw (1, 1) circle (1cm) [line width=.3mm];
    \draw (1,1) node {$SC^1_n$};
    
    \node[place] (a0) at (1.5, .5) {};
    \node[place] (a1) at (1.5, .75) {};
    \node[place] (a2) at (1.5, 1) {};
    \node[place] (a3) at (1.5, 1.25) {};
    \node[place] (a4) at (1.5, 1.5) {};

    \draw (2.25, -.25) rectangle (10.75 , 2.25) [line width=.3mm];
    
    \draw (2.5, 0) rectangle (4.5, 2) [line width=.3mm];
    \draw (3.5, 1) node {$H^1_n$};
    
    \node[place] (b0) at (3, .5) {}
    edge [-] node[auto] {} (a0);
    \node[place] (b1) at (3, .75) {}
    edge [-] node[auto] {} (a1);
    \node[place] (b2) at (3, 1) {}
    edge [-] node[auto] {} (a2);
    \node[place] (b3) at (3, 1.25) {}
    edge [-] node[auto] {} (a3);
    \node[place] (b4) at (3, 1.5) {}
    edge [-] node[auto] {} (a4);

     \draw [decorate,decoration={brace,amplitude=10pt}] 
(4,2.5) -- (6,2.5) node [black,midway, yshift=15pt]
{\footnotesize Random Edges};
    
    \node[place] (l0) at (6, .5) {};
    \node[place] (l1) at (6, 1) {};
    \node[place] (l2) at (6, 1.5) {};

    \node[place] (r0) at (4, .25) {}
    edge [-] node[auto] {} (l2);
    \node[place] (r1) at (4, .4375) {}
    edge [-] node[auto] {} (l1);
    \node[place] (r2) at (4, .625) {}
    edge [-] node[auto] {} (l0);
    \node[place] (r3) at (4, .8125) {}
    edge [-] node[auto] {} (l1);
    \node[place] (r4) at (4, 1) {}
    edge [-] node[auto] {} (l2);
    \node[place] (r5) at (4, 1.1875) {}
    edge [-] node[auto] {} (l0);
    \node[place] (r6) at (4, 1.375) {}
    edge [-] node[auto] {} (l1);
    \node[place] (r7) at (4, 1.5625) {}
    edge [-] node[auto] {} (l0);
    \node[place] (r8) at (4, 1.75) {}
    edge [-] node[auto] {} (l2);

    \draw (5.5, 0) rectangle (7.5, 2) [line width=.3mm];
    \draw (6.5, 1) node {$H^2_n$};

    \node[place] (n0) at (9, .5) {};
    \node[place] (n1) at (9, 1) {};
    \node[place] (n2) at (9, 1.5) {};

    \node[place] (t0) at (7, .25) {}
    edge [-] node[auto] {} (n2);
    \node[place] (t1) at (7, .4375) {}
    edge [-] node[auto] {} (n1);
    \node[place] (t2) at (7, .625) {}
    edge [-] node[auto] {} (n0);
    \node[place] (t3) at (7, .8125) {}
    edge [-] node[auto] {} (n1);
    \node[place] (t4) at (7, 1) {}
    edge [-] node[auto] {} (n2);
    \node[place] (t5) at (7, 1.1875) {}
    edge [-] node[auto] {} (n0);
    \node[place] (t6) at (7, 1.375) {}
    edge [-] node[auto] {} (n1);
    \node[place] (t7) at (7, 1.5625) {}
    edge [-] node[auto] {} (n0);
    \node[place] (t8) at (7, 1.75) {}
    edge [-] node[auto] {} (n2);
     \draw [decorate,decoration={brace,amplitude=10pt}] 
(7,2.5) -- (9,2.5) node [black,midway, yshift=15pt]
{\footnotesize Random Edges};
    \draw (8.5, 0) rectangle (10.5, 2) [line width=.3mm];
    \draw (9.5, 1) node {$H^3_n$};
    
    \node[place] (c0) at (10, .5) {};
    \node[place] (c1) at (10, .75) {};
    \node[place] (c2) at (10, 1) {};
    \node[place] (c3) at (10, 1.25) {};
    \node[place] (c4) at (10, 1.5) {};

    \node[place] (d0) at (11.5, .5) {}
    edge [-] node[auto] {} (c0);
    \node[place] (d1) at (11.5, .75) {}
    edge [-] node[auto] {} (c1);
    \node[place] (d2) at (11.5, 1) {}
    edge [-] node[auto] {} (c2);
    \node[place] (d3) at (11.5, 1.25) {}
    edge [-] node[auto] {} (c3);
    \node[place] (d4) at (11.5, 1.5) {}
    edge [-] node[auto] {} (c4);

    \draw (12, 1) circle (1cm) [line width=.3mm];
    \draw (12,1) node {$SC^2_n$};
    
  \end{tikzpicture}  
  \caption{A diagram of the constant indegree, linear sized, ST-robust graph $M_n$.}
\end{figure}

\begin{theorem} \label{is-str} 
  There exists a constant $c' > 0$ such that for all sets $S \subset V(M_n)$ with $|S| \leq c'n/2$, $M_n$ is $(|S|, n - |S|)$-ST-robust, with $n$ inputs and $n$ outputs and constant indegree.
\end{theorem}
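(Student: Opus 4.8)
The plan is to exploit the three-layer structure of $M_n$ --- first $SC^1_n$, then the middle block (the copies $H^1_n,H^2_n,H^3_n$ with their random edges, i.e.\ $G_n$), then $SC^2_n$ --- by routing through the middle with Lemma~\ref{prob} and using the two superconcentrators to ``boost'' the constant-fraction sets produced by Lemma~\ref{prob} up to sets of size $n-|S|$. The size and indegree claims are immediate ($SC^1_n,SC^2_n$ are linear-size with constant indegree and $\indeg(M_n)\le\max\{\delta_{SC},\,\indeg(H_n)+\tau,\,1\}$), so the real content is ST-robustness; fix once and for all a sample of $G_n$ for which the conclusion of Lemma~\ref{prob} holds (it exists since that event has positive probability). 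Given $S\subset V(M_n)$ with $|S|\le c'n/2$, partition $S=S_1\sqcup S_m\sqcup S_2$ with $S_1=S\cap V(SC^1_n)$, $S_m=S\cap (V(H^1_n)\cup V(H^2_n)\cup V(H^3_n))$, $S_2=S\cap V(SC^2_n)$, and $s_i=|S_i|$. Choosing $c'\le c$ (the constant of Lemma~\ref{prob}) gives $s_m\le cn/2$, so Lemma~\ref{prob} yields $A\subseteq V(H^1_n)$ and $B\subseteq V(H^3_n)$ with $|A|,|B|\ge 9cn/40$ such that every $u\in A$ reaches every $v\in B$ in $G_n-S_m$.

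For the amplification step, match each node $u$ of $H^1_n$ to the output $o(u)$ of $SC^1_n$ feeding into it (both sides have $n$ nodes), and symmetrically let $i(v)$ be the input of $SC^2_n$ that node $v$ of $H^3_n$ feeds into. Set $A'=\{u\in A: o(u)\notin S_1\}$ and $B'=\{v\in B: i(v)\notin S_2\}$, so $|A'|\ge 9cn/40-s_1$ and $|B'|\ge 9cn/40-s_2$. Let $I'$ be the set of inputs of $M_n$ that reach some vertex of $\{o(u):u\in A'\}$ inside $SC^1_n-S_1$. I claim $|I'|\ge n-s_1$: every input outside $I'$ is separated from $\{o(u):u\in A'\}$ by $S_1$, so $S_1$ is a vertex cut between these two sets, while the superconcentrator property supplies $\min(n-|I'|,\,|A'|)$ vertex-disjoint paths between them; by Menger's theorem $\min(n-|I'|,\,|A'|)\le s_1$, and since $|A'|>s_1$ once $c'$ is small enough (e.g.\ $c'=c/5$ gives $|A'|\ge cn/8>cn/10\ge s_1$) we conclude $n-|I'|\le s_1$. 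A mirror argument through $SC^2_n$ produces a set $O'$ of outputs of $M_n$, each reachable from some $i(v)$ with $v\in B'$ inside $SC^2_n-S_2$, with $|O'|\ge n-s_2$.

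To finish, take any $x\in I'$ and $y\in O'$: pick $u\in A'$ with a path from $x$ to $o(u)$ in $SC^1_n-S_1$, pick $v\in B'$ with a path from $i(v)$ to $y$ in $SC^2_n-S_2$, and use the $u$-to-$v$ path in $G_n-S_m$ from Lemma~\ref{prob}; concatenating along the edges $o(u)\to u$ and $v\to i(v)$ gives a walk from $x$ to $y$ whose three segments lie in the pairwise vertex-disjoint pieces $SC^1_n$, $G_n$, $SC^2_n$ and each avoids its part of $S$, hence this walk lies in $M_n-S$ and contains a path. Let $H$ be the subgraph of $M_n-S$ formed by the union of all these paths over $x\in I'$, $y\in O'$. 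Since inputs of $M_n$ have indegree $0$ and outputs have outdegree $0$, none can appear in the interior of such a path, so $I\cap V(H)=I'$ and $O\cap V(H)=O'$ --- both of size $\ge n-|S|$ since $s_1,s_2\le|S|$ --- and every input--output pair of $H$ is joined by a path in $H$; thus $M_n$ is $(|S|,\,n-|S|)$-ST-robust for, say, $c'=c/5$. The step I expect to be the main obstacle is precisely this superconcentrator amplification: making the Menger-plus-superconcentrator argument lift a merely constant-fraction set ($A$ of size $\approx 9cn/40$) up to the almost-complete set $I'$ of size $\ge n-|S|$, and pinning down a single constant $c'$ that simultaneously meets the hypothesis of Lemma~\ref{prob} and keeps $|A'|,|B'|$ strictly above $s_1,s_2$; the remaining bookkeeping (vertex-disjointness of the concatenation, identifying the inputs and outputs of $H$) is routine given the layered construction.
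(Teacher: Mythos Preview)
Your proof is correct and follows essentially the same approach as the paper's: apply Lemma~\ref{prob} to the middle block to obtain the all-to-all connected sets $A,B$, then use the node-disjoint-paths property of the superconcentrators on either side to show that at most $|S|$ inputs (resp.\ outputs) fail to reach $A$ (resp.\ be reached from $B$), and concatenate. Your version is somewhat more careful in its bookkeeping (splitting $S$ into $S_1,S_m,S_2$, explicitly handling the bridging edges $o(u)\to u$ via the set $A'$, and invoking Menger's theorem by name), and you take a smaller constant $c'=c/5$ where the paper uses $c'=9c/40$, but the underlying argument is the same.
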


\begin{proof}
Let $c' = 9c / 40$, where $c$ is the constant from $G_n$. Consider $M_n - S$. Then because $|S \cap (H_n^1 \cup H_n^2)| \leq |S| \leq c' n/2 \leq cn/2$, by Lemma \ref{prob} with a high probability there exists sets $A$ in $H^1_n$ and $B$ in $H^3_n$ with $|A|,|B| \geq \frac{9}{10} k \frac{n^{2/3}}{2} = \frac{9}{40} c n = c'n$, such that every node in $A$ connects to every node in $B$. By the properties of superconcentrators, the size of the set $BAD_1$ of inputs $u$ in $SC^1_n$ that can't reach {\em any} node in $A$ in $M_n-S$.  We claim that $|BAD_1| \leq |S| \leq c'n$. Assume for contradiction that  $|BAD_1|>|S|$ then $SC^1_n$ contains at least $\min\{ |BAD_1|, |A|\} > |S|$ node disjoint paths between $BAD_1$ and $A$. At least one of these node disjoint paths does not intersect $S$ which contradicts the definition of $BAD_1$. Similarly, we can bound the size of $BAD_2$, the set of outputs in $SC^n$ which are not reachable from any node in $B$. Given any input $u \not \in BAD_1$ of $SC^1_n$ and any output $v \not \in BAD_2$ of $SC^2_n$ we can argue that $u$ is connected to $v$ in $M_n-S$ since we can reach some node $x \in A$ from $u$ and $v$ can be reached from some node $y \in B$ and {\em any} such pair $x,y$ is connected by a path in $M_n-S$. It follows that $M_n$ is  $(|S|, n - |S|)$-ST-robust.
\end{proof}

\begin{corollary} (of Theorem \ref{is-str}) \label{cor:max-std}
  For all $\epsilon > 0$, there exists a family of DAGs $\mathbb{M} = \{ M_n^\epsilon \}_{n=1}^\infty$, where each $M_n^\epsilon$ is a $c$-maximally ST-robust graphs with $|V(M_n)| \leq c_\epsilon n$, indegree$(M_n) \leq c_\epsilon$, and depth $d = n^{1-\epsilon}$.
\end{corollary}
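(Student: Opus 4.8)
The plan is to re-run Construction~\ref{ThreeGrates}, Construction~\ref{con:st}, and the proofs of Lemma~\ref{prob} and Theorem~\ref{is-str} almost verbatim, with one change: the three depth-robust layers $H_n^1,H_n^2,H_n^3$ are now isomorphic copies of Schnitger's $(c_\epsilon n,\ \Theta_\epsilon(n^{1-\epsilon'}))$-depth-robust DAG on $n$ nodes of constant indegree, whose existence for every $\epsilon'>0$ is the parametrized form of Theorem~A of \cite{Sch83} recalled in the introduction; we set $\epsilon':=\epsilon/2$ and assume (w.l.o.g.) $\epsilon<1$, so $\epsilon'<1/2$. Call the resulting graphs $G_n^{\epsilon}$ and $M_n^{\epsilon}$. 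All structural bookkeeping is unchanged: each layer has indegree $O_\epsilon(1)$ and $O(n)$ nodes, every node of the middle and top layers receives $\tau=\tau(\epsilon)$ sampled in-edges ($\tau$ a constant fixed below), and the layers are sandwiched between two linear-size, constant-indegree $n$-superconcentrators \cite{Pippenger77}, so $M_n^\epsilon$ automatically has $O_\epsilon(n)$ nodes and indegree $O_\epsilon(1)$. What remains is to check that (i) the disjoint-paths lemma, (ii) the probabilistic argument, and (iii) the length of the surviving paths all survive the new parameters.

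Step~(i) is the standard fact that a depth-robust graph contains many disjoint long paths: for $|S|\le c_\epsilon n/2$ we greedily carve length-$\Theta_\epsilon(n^{1-\epsilon'})$ paths out of $H_n^i-S$, deleting each before extracting the next, which is possible as long as the total number of deleted vertices stays below $c_\epsilon n$; this gives $k:=\Theta_\epsilon(n^{\epsilon'})$ vertex-disjoint paths of length $\Theta_\epsilon(n^{1-\epsilon'})$ in each layer, the analogue of Lemma~\ref{paths-lemma}. For step~(ii), in the proof of Lemma~\ref{prob} the only quantity that depends on $n$ is the probability that one half $U$ of a middle-layer path --- which now has $s=\Theta_\epsilon(n^{1-\epsilon'})$ vertices --- fails to be reached from the designated half of another layer's path; for a single vertex this is $(1-\Theta(s/n))^{\tau}$, so the probability that \emph{all} $\Theta(s)$ relevant vertices fail is $e^{-\Theta(\tau s^2/n)}=e^{-\Theta(\tau n^{1-2\epsilon'})}$, which decays because $\epsilon'<1/2$. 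Hence the event ``at least $k/10$ of the $k$ candidate partners fail'' has probability $\le\binom{k}{k/10}\big(e^{-\Theta(\tau n^{1-2\epsilon'})}\big)^{k/10}\le 2^{\Theta_\epsilon(n^{\epsilon'})}e^{-\Theta(\tau n^{1-\epsilon'})}$ (using that failure events attached to distinct disjoint path-halves are negatively associated, so a product bound is legitimate), and $BAD_S$ --- ``at least $k/10$ of the $k$ path indices are bad'' --- has probability $\le\binom{k}{k/10}\big(\P[BAD_{i,S}]\big)^{k/10}\le 2^{\Theta_\epsilon(n^{2\epsilon'})}e^{-\Theta(\tau n)}$, the two nested thresholds having contributed a combined factor $(k/10)^2=\Theta(n^{2\epsilon'})$ in the exponent that exactly cancels the $n^{-2\epsilon'}$ loss. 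For $\tau=\tau(\epsilon)$ a sufficiently large constant this is $2^{-\Omega(n)}$, which survives the union bound $2^{O_\epsilon(n)}$ over all sets $S$; so, as in Lemma~\ref{prob}, with high probability for every $|S|\le c_\epsilon n/2$ there exist $A\subseteq V(H_n^1)$, $B\subseteq V(H_n^3)$ with $|A|,|B|\ge\Theta_\epsilon(kn^{1-\epsilon'})=\Theta_\epsilon(n)$ such that every $x\in A$ reaches every $y\in B$ in $G_n^\epsilon-S$.

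For step~(iii) we arrange that the route witnessing $x\rightsquigarrow y$ is long. Define the connectivity events used to build $A$ and $B$ so that ``$H_n^1$-path $i$ reaches middle path $P^2_j$'' means some vertex in the \emph{first quarter} of $P^2_j$ has a sampled in-edge from the sink-side half of path $i$, and symmetrically for ``$P^2_j$ reaches $H_n^3$-path $i'$'' using the \emph{last quarter} of $P^2_j$; the probability estimates of step~(ii) are unaffected up to the hidden constants. Then the pigeonhole argument of Lemma~\ref{prob} produces an $x\rightsquigarrow y$ route that enters some $P^2_j$ inside its first quarter and leaves inside its last quarter, hence traverses $\ge\Theta_\epsilon(n^{1-\epsilon'})$ consecutive vertices of $P^2_j$. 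Finally insert the superconcentrator argument of Theorem~\ref{is-str} unchanged --- it bounds the number of discarded inputs of $SC^1_n$ and outputs of $SC^2_n$ by $|S|$ and never inspects path lengths --- to conclude that for all $|S|\le c(\epsilon)n$ with $c(\epsilon)=\min\{c_\epsilon/2,\ \Theta_\epsilon(1)\}>0$, there remain $n-|S|$ inputs and $n-|S|$ outputs of $M_n^\epsilon-S$, each pair joined by a path of length $\ge\Theta_\epsilon(n^{1-\epsilon'})\ge n^{1-\epsilon}$ for all $n$ past a constant threshold (the finitely many smaller $n$ being handled trivially). Thus $M_n^\epsilon$ is $c$-maximally ST-robust with depth $n^{1-\epsilon}$, of linear size and constant indegree.

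The main obstacle is step~(ii). Because $n^{1-\epsilon'}$ is now within a polynomial factor of $n$, each individual ``missed connection'' is only polynomially unlikely, rather than $e^{-n^{\Omega(1)}}$-unlikely as in the $d=n^{2/3}$ case, so it is not a priori clear that the union bound over the $2^{\Theta(n)}$ choices of $S$ closes. It does because the number of disjoint long paths $k=\Theta(n^{\epsilon'})$ grows exactly as fast as the per-event probability degrades, and the two nested ``at least $k/10$ bad'' thresholds push the exponent back up to $\Theta(n)$; making this rigorous requires (a) verifying that missed-connection events attached to distinct disjoint path-halves are negatively correlated, so the product bounds are valid, and (b) the quarter/half bookkeeping of step~(iii), ensuring the extracted route genuinely crosses a $\Theta_\epsilon(n^{1-\epsilon'})$-length stretch of a middle-layer path rather than merely grazing the middle layer. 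Keeping $\tau$ a constant that depends only on $\epsilon$, never on $n$, is what preserves constant indegree.
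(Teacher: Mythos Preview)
Your approach is essentially the same as the paper's: substitute Schnitger's $(c_\epsilon n,\,n^{1-\epsilon})$-depth-robust graph for the $(cn,\,n^{2/3})$ one used in Constructions~\ref{ThreeGrates}--\ref{con:st} and rerun Lemmas~\ref{paths-lemma}--\ref{prob} and Theorem~\ref{is-str}. The paper's proof is a two-line sketch (``by the same arguments we used in this section''); you have carried out those arguments explicitly, and in two places more carefully than the paper does. First, you correctly isolate the point where the union bound could fail --- the single-disconnect probability is only $e^{-\Theta(\tau n^{1-2\epsilon'})}$ --- and verify that the doubly nested ``at least $k/10$ bad'' thresholds (matching the structure in the proof sketch of Lemma~\ref{prob}) restore the exponent to $\Theta(\tau n)$, together with the negative-association observation that licenses the product bound. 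Second, your first-quarter/last-quarter device makes the depth claim rigorous where the paper is content to note informally that connecting ``the upper half of one path to the lower half of another'' forces a long traversal. Both are elaborations of the paper's method rather than departures from it.
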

\begin{proof} (sketch)
  In the proof of Lemma \ref{paths-lemma}, we used $(cn, n^{2/3})$-depth robust graphs. When considering the paths $A_i$ and $B_j$, we were considering connecting the upper half of one path to the lower half of another. Thus, after we remove nodes from $M_n$, there exists a path of length at least $n^{2/3} $ that connects any remaining input to any remaining output. Thus $M_n$ is $c$-maximally ST-robust with depth $d = n^{2/3}$. In \cite{Sch83}, Schnitger provides a construction that is $(cn, n^{1-\epsilon})$-depth robust for all constant $\epsilon > 0.$ By the same arguments we used in this section, we can construct $c$-maximally ST-robust graphs with depth $d = n^{1-\epsilon}$, where the constant $c$ depends on $\epsilon$. 
\end{proof}

\subsection{Constructing Maximal ST-Robust Graphs}

In this section, we construct maximal ST-robust graphs, which are 1-maximally ST-robust, from $c$-maximally ST-robust graphs. We give the following construction:

\begin{construction}[$\mathbb{O}(M_n)$] \label{optSTR}
  Let $M_n$ be a $c$-maximally ST-robust graph on $O(n)$ nodes. Let $O$ be a set $o_1, o_2, \ldots, o_n$ of $n$ output nodes and let $I$ be a set $i_1, i_2, \ldots, i_n$ of $n$ input nodes. Let $S_j$ for $1 \leq j \leq \lceil \frac{1}{c} \rceil$ be a copy of $M_n$ with outputs $o_1^j, o_2^j, \ldots, o_n^j$ and inputs $i_1^j, i_2^j, \ldots, i_n^j$. Then for all $1 \leq j \leq n$ and for all $1 \leq k \leq n$, add a directed edge from $i_k$ to $i_k^j$ and from $o_k^j$ to $o_k$. 
\end{construction}

Because we connect $\lceil \frac{1}{c} \rceil$ copies of $M_n$ to the output nodes, $\mathbb{O}(M_n)$ has indegree $\max \{ \delta, \lceil \frac{1}{c} \rceil\}$, where $\delta$ is the indegree of $M_n$. Also, if $M_n$ has $kn$ nodes, then $\mathbb{O}(M_n)$ has $(k \lceil \frac{1}{c} \rceil + 2)n$ nodes. We now show that $\mathbb{O}(M_n)$ is a maximal ST-robust graph.

\begin{theorem}\label{thm:cmax-str}
  Let $M_n$ be a $c$-maximally ST-robust graph. Then $\mathbb{O}(M_n)$ is a maximal ST-robust graph.
\end{theorem}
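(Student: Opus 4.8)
The plan is to verify Definition~\ref{def:st} directly: I will show that $\mathbb{O}(M_n)$ from Construction~\ref{optSTR} is $(K,n-K)$-ST-robust for every $0 \le K \le n$. Combined with the already-noted fact that $\mathbb{O}(M_n)$ has constant indegree $\max\{\delta,\lceil 1/c\rceil\}$, this is precisely the assertion that $\mathbb{O}(M_n)$ is maximally ST-robust. So fix an arbitrary $D \subseteq V(\mathbb{O}(M_n))$ with $|D| \le K$ and partition it as $D_I = D\cap I$, $D_O = D\cap O$, and $D_j = D\cap V(S_j)$ for each of the $\lceil 1/c\rceil$ copies of $M_n$, so that $|D_I|+|D_O|+\sum_j|D_j| \le K$.

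The first step is a pigeonhole argument over the copies. Some index $j^*$ satisfies $|D_{j^*}| \le \frac{1}{\lceil 1/c\rceil}\sum_j |D_j| \le c\sum_j|D_j| \le cK \le cn$, using $\lceil 1/c\rceil \ge 1/c$ and $K \le n$. Since $M_n$ is $c$-maximally ST-robust and $|D_{j^*}| \le cn$, the copy $S_{j^*}$ is $(|D_{j^*}|,\,n-|D_{j^*}|)$-ST-robust, so $S_{j^*}-D_{j^*}$ contains a subgraph $H'$ with input set $A'\subseteq\{i_1^{j^*},\dots,i_n^{j^*}\}$ and output set $B'\subseteq\{o_1^{j^*},\dots,o_n^{j^*}\}$, with $|A'|,|B'|\ge n-|D_{j^*}|$, such that every node of $A'$ reaches every node of $B'$ inside $H'$.

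The second step lifts $H'$ back to the top layer through the single attaching edges of Construction~\ref{optSTR}. Put $A=\{i_k\in I\setminus D_I : i_k^{j^*}\in A'\}$ and $B=\{o_k\in O\setminus D_O : o_k^{j^*}\in B'\}$. Then $|A|\ge|A'|-|D_I|\ge n-|D_{j^*}|-|D_I|$ and $|B|\ge|B'|-|D_O|\ge n-|D_{j^*}|-|D_O|$, while each $i_k\in A$ reaches $i_k^{j^*}\in V(H')$ via the surviving edge $(i_k,i_k^{j^*})$ and each $o_\ell^{j^*}\in B'$ reaches $o_\ell\in B$ via $(o_\ell^{j^*},o_\ell)$. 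The key bookkeeping step now combines the pigeonhole bound with the partition of $D$: $|D_{j^*}|+|D_I|\le\sum_j|D_j|+|D_I|\le K-|D_O|\le K$, and symmetrically $|D_{j^*}|+|D_O|\le K$, hence $|A|,|B|\ge n-K$. I would then take $H$ to be the subgraph of $\mathbb{O}(M_n)-D$ on vertex set $A\cup V(H')\cup B$, keeping the edges $(i_k,i_k^{j^*})$ for $i_k\in A$, all edges of $H'$, and the edges $(o_k^{j^*},o_k)$ for $o_k\in B$. For any $i_k\in A$ and $o_\ell\in B$ the walk $i_k\to i_k^{j^*}\rightsquigarrow o_\ell^{j^*}\to o_\ell$, whose middle segment is the $A'$--$B'$ path inside $H'$, lies entirely in $H$; thus $H$ witnesses $(K,n-K)$-ST-robustness, and since $K$ was arbitrary, $\mathbb{O}(M_n)$ is maximally ST-robust.

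I do not expect a genuine obstacle here; the only delicate point is the counting showing that deletions squandered on the other $\lceil 1/c\rceil-1$ copies, and deletions among the outputs $D_O$ (resp. inputs $D_I$), never count against the surviving inputs (resp. outputs). This is exactly why one good copy $S_{j^*}$ can shoulder the full $(K,n-K)$ guarantee, and why $\lceil 1/c\rceil$ copies suffice to cover all $K\le n$. The degenerate cases ($K=n$, or $D$ heavily hitting $I\cup O$) are absorbed by the same inequalities and require no separate treatment.
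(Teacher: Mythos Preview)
Your proof is correct and follows essentially the same approach as the paper: pigeonhole over the $\lceil 1/c\rceil$ copies to locate a copy $S_{j^*}$ with at most $cn$ deletions, apply $c$-maximal ST-robustness there, and lift back to the global input/output layers using the disjointness of $D_I$, $D_{j^*}$, and $D_O$ inside $D$ to get the $n-K$ bound. Your bookkeeping is a bit more explicit than the paper's (which writes $|I\setminus I'|\le |I\cap R|+|V(S_j)\cap R|\le |R|$), but the argument is the same.
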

\begin{proof}
  Let $R \subset V(\mathbb{O}(M_n))$ with $|R| = k$. Let $R = R_I \cup R_M \cup R_O$, where $R_I = R \cap I$, $R_O = R \cap O$, and $R_M = R \cap \left( \cup_{i = 1}^{\lceil 1/c \rceil} S_i \right)$. Consider $\mathbb{O}(M_n) - R$. We see that $|R_M| \leq k$, so by the Pidgeonhole Principal at least one $S_j$ has less than $cn$ nodes removed, say it has $t$ nodes removed for $t \leq cn$. Hence $t \leq |R_M|$. Since $S_j$ is $c$-max ST-robust there exists a subgraph $H$ of $S_j \ R$ containing $n-t$ inputs and $n-t$ outputs such that every input is connected to all of the outputs. Let $H'$ be the subgraph induced by the nodes in $V(H) \cup I' \cup O'$, where $I' = \{(i_a, i_a^b) | i_a^b \in H\}$ and $O' = \{(o_a^b, o_a) | o_a^b \in H \}$. 

  \begin{claim}
    The graph $H'$ contains at least $n-k$ inputs and $n-k$ outputs and there is a path between every pair of input and output nodes.
  \end{claim}
  \begin{proof}
    The set $|I \setminus I'| \leq |I \cap R| + |V(S_j) \cap R| \leq |R| \leq k$. Similarly, $|O \setminus O'| \leq |O \cap R| + |V(S_j) \cap R| \leq |R| \leq k.$
    Let $v \in I'$ be some input. By the connectivity of $H$, $v$ can reach all of the outputs in $O'$. Thus there is a path between every pair of input and output nodes.
  \end{proof}
  Thus $\mathbb{O}(M_n)$ is $(k, n-k)$-ST-robust for all $1 \leq k \leq n$. Therefore $\mathbb{O}(M_n)$ is a maximal ST-robust graph.
\end{proof}

\begin{corollary} (of Theorem \ref{thm:cmax-str})
  For all $\epsilon > 0$, there exists a family $\mathbb{M^\epsilon} = \{ M_k^\epsilon\}_{k = 1}^\infty$ of max ST-robust graphs of depth $d = n^{1 - \epsilon}$ such that $|V(M_k^\epsilon)| \leq c_\epsilon n$ and indegree$(M_k^\epsilon) \leq c_\epsilon$. 
\end{corollary}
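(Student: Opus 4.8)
The plan is to chain together the two constructions already built in this section. First, fix $\epsilon > 0$ and invoke Corollary~\ref{cor:max-std} to obtain a family $\{N_n\}_{n=1}^\infty$ of $c$-maximally ST-robust graphs, where $c = c_\epsilon \in (0,1]$ is a constant depending only on $\epsilon$, each $N_n$ has $|V(N_n)| \le c_\epsilon n$ nodes and $\mathrm{indeg}(N_n) \le c_\epsilon$, and each input--output path guaranteed by its ST-robustness property has length at least $n^{1-\epsilon}$. Then I would apply Construction~\ref{optSTR} to each member, setting $M_n^\epsilon := \mathbb{O}(N_n)$. By Theorem~\ref{thm:cmax-str}, each $M_n^\epsilon$ is maximal (i.e.\ $1$-maximally) ST-robust with $n$ inputs and $n$ outputs; by the size/indegree accounting recorded right after Construction~\ref{optSTR}, it has at most $\bigl(c_\epsilon \lceil 1/c_\epsilon \rceil + 2\bigr) n$ nodes and indegree at most $\max\{c_\epsilon, \lceil 1/c_\epsilon \rceil\}$, both bounded by a constant depending only on $\epsilon$. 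Redefining $c_\epsilon$ to be the larger of these two constants absorbs the blow-up and yields the claimed bounds $|V(M_n^\epsilon)| \le c_\epsilon n$ and $\mathrm{indeg}(M_n^\epsilon) \le c_\epsilon$.

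The only ingredient not handed to us verbatim by Theorem~\ref{thm:cmax-str} is that the depth parameter survives the $\mathbb{O}(\cdot)$ operation, so the one genuine task is to re-examine the proof of Theorem~\ref{thm:cmax-str} while tracking path lengths. Recall that after deleting any set $R \subset V(\mathbb{O}(N_n))$ with $|R| = r$, that proof locates a copy $S_j$ of $N_n$ with at most $t \le r$ internal nodes removed together with a subgraph $H$ of $S_j - R$ whose $n-t$ surviving inputs and $n-t$ surviving outputs are all pairwise connected, and then forms the certifying subgraph $H'$ by prepending the edge $i_a \to i_a^b$ to each surviving input $i_a^b$ of $H$ and appending the edge $o_a^b \to o_a$ to each surviving output $o_a^b$. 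Since $N_n$ is $c$-maximally ST-robust \emph{with depth $n^{1-\epsilon}$}, the path inside $H$ from any surviving input to any surviving output already has length at least $n^{1-\epsilon}$, and prepending and appending one edge each only lengthens it; hence every input--output path certified by $H'$ has length at least $n^{1-\epsilon} + 2 \ge n^{1-\epsilon}$. This shows $M_n^\epsilon$ is maximal ST-robust with depth $n^{1-\epsilon}$, as required. (The family in the statement is indexed by $k$ rather than $n$; this is a harmless renaming.)

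The hard part is essentially bookkeeping: making sure the depth is measured against the correct parameter --- the common number $n$ of inputs and outputs of $N_n$ and $M_n^\epsilon$, which Construction~\ref{optSTR} leaves unchanged --- and checking that none of the constant-factor overhead introduced by that construction degrades the $n^{1-\epsilon}$ guarantee. Since all of the substantive content (linear-size, constant-indegree $c$-maximally ST-robust graphs of depth $n^{1-\epsilon}$ in Corollary~\ref{cor:max-std}, and the promotion of $c$-maximal to $1$-maximal ST-robustness in Theorem~\ref{thm:cmax-str}) is already established, there is no remaining obstacle.
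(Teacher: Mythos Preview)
Your proposal is correct and follows exactly the same approach as the paper: apply Construction~\ref{optSTR} to the family from Corollary~\ref{cor:max-std} and invoke Theorem~\ref{thm:cmax-str}. In fact your write-up is more careful than the paper's two-line proof, since you explicitly verify that the depth parameter $n^{1-\epsilon}$ survives the $\mathbb{O}(\cdot)$ operation --- a point the paper glosses over.
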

\begin{proof}
  Apply Construction \ref{optSTR} to the family graphs  $\mathbb{M^\epsilon} = \{ M_k^\epsilon\}_{k = 1}^\infty$ from Corollary \ref{cor:max-std}. Then by Theorem \ref{thm:cmax-str}, the family of graphs $\{ \mathbb{O}(M_k^\epsilon) \}_{k=1}^\infty$ is the desired family.
\end{proof}

\section{Applications of ST-Robust Graphs}
As outlined previously maximally ST-Robust graphs give us a tight connection between edge-depth robustness and node-depth robustness. Because edge-depth-robust graphs are often easier to design than node-depth robust graphs~\cite{Sch83} this gives us a fundamentally new approach to construct node-depth-robust graphs. Beyond this exciting connection we can also use ST-robust graphs to construct perfectly tight proofs of space~\cite{ITCS:Pietrzak19a,EC:Fisch19} and asymptotically superior wide-block labeling functions \cite{C:CheTes19}.

\subsection{Tight Proofs of Space}

In Proof of Space constructions~\cite{ITCS:Pietrzak19a} we want to find a DAG $G=(V,E)$ with small indegree along with a challenge set $V_C \subseteq V$. Intuitively, the prover will label the graph $G$ using a hash function $H$ (often modeled as a random oracle in security proofs) such that a node $v$ with parents $v_1,\ldots, v_\delta$ is assigned the label $L_v = H(L_{v_1},\ldots,L_{v_\delta})$. The prover commits to storing $L_v$ for each node $v$ in the challenge set $V_C$. The pair $(G,V_C)$ is said to be $(s,t,\epsilon)$-hard if for any subset $S \subseteq V$ of size $|S| \leq s$ at least $(1-\epsilon)$ fraction of the nodes in $V_C$ have depth $\geq t$ in $G-S$ --- a node $v$ has depth $\geq t$ in $G-S$ if there is a path of length $\geq t$ ending at node $v$.  Intuitively, this means that if a cheating prover only stores $s \leq |V_C|$ labels and is challenged to reproduce a random label $L_v$ with $v \in V_C$ that, except with probability $\epsilon$, the prover will need at least $t$ sequential computations to recover $L_v$ --- as long as $t$ is sufficiently large the verifier the cheating prover will be caught as he will not be able to recover the label $L_v$ in a timely fashion. Pietrzak argued that  $(s,t,\epsilon)$-hard graphs translate to secure Proofs of Space in the parallel random oracle model~\cite{ITCS:Pietrzak19a}. 

We want $G$ to have small indegree $\delta(G)$ (preferably constant) as the prover will need $O(N \delta(G))$ steps and we want $|V_C| = \Omega(N)$ and we want $\epsilon$ to be small while $s,t$ should be larger.  Pietrzak~\cite{ITCS:Pietrzak19a} proposed to let $G_{\eps}$ be an $\epsilon$-extreme depth-robust graph with $N' = 4N$ nodes and to let $V_C = [3N+1,4N]$ be the last $N$ nodes in this graph. An  $\epsilon$-extreme depth-robust graph with $N'$ nodes is $(e,d)$-depth robust for any $e+d \leq (1-\epsilon)N$. Such a graph is $(s,N,s/N+ 4\epsilon)$-hard for any $s \leq N$. Alwen et al.~\cite{EC:AlwBloPie18} constructed $\epsilon$-extreme depth-robust graphs with indegree $\delta(G)=O(\log N)$  though the hidden constants seem to be quite large. Thus, it would take time $O(N \log N)$ for the prover to label the graph $G$. We remark that $\epsilon=s/|V_C|$ is the tightest possible bound one can hope for as the prover can always store $s$ labels from the set $V_C$.

We remark that if we take $V_C$ to be any subset of output nodes from a maximally ST-robust graph and overlay and $(e=s,d=t)$-depth robust graph over the input nodes then the resulting graph will be $(s,t,\epsilon=s/|V_C|)$-hard --- optimally tight in $\epsilon$. In particular, given a DAG $G = (V=[N],E)$ with $N$ nodes devine the overlay graph $H_G$ by starting with a maximally ST-Robust graph with $|V|$ inputs $I=\{x_1,\ldots,x_{|V|}\}$ and $|V|$ outputs $O$ then for every directed edge $(u,v)\in E(G)$ we add the directed edge $(x_u,x_v)$ to $E(H_G)$ and we specify a target set $V_C \subseteq O$.  Fisch~\cite{EC:Fisch19} gave a practical construction of $(G,V_C)$ with indegree $O(\log N)$ that is $(s,N,\epsilon = s/N+\epsilon')$-hard. The constant $\epsilon'$ can be arbitrarily small though the number of nodes in the graph scales with $O(N \log 1/\epsilon')$. Utililizing ST-robust graphs we fix $\epsilon' = 0$ without increasing the size of the graph\footnote{As a disclaimer we are not claiming that our construction would be more efficient than ~\cite{EC:Fisch19} for practical parameter settings.}.

\begin{theorem} \label{OverlayPOS}
If $G$ is $(e,d)$-depth robust then the pair $(H_G,V_C)$ specified above is $(s,t=d+1,s/|V_C|)$-hard for any $s \leq e$.
\end{theorem}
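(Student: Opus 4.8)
The goal is to fix an arbitrary $S \subseteq V(H_G)$ with $|S| \le s$ and to show that at most $s$ nodes of $V_C$ fail to have depth $\ge d+1$ in $H_G - S$; since $t = d+1$, this yields exactly the bound $\epsilon = s/|V_C|$. Write $n := |V|$ and let $M$ denote the maximally ST-robust graph (with inputs $I = \{x_1,\dots,x_n\}$ and outputs $O$) out of which $H_G$ is built, so that $V(H_G) = V(M)$ and $E(H_G) = E(M) \cup \{(x_u,x_v) : (u,v) \in E(G)\}$, with $V_C \subseteq O$. Applying maximal ST-robustness of $M$ with the deleted set $S$ (here $|S| \le s \le e < n$), I obtain a subgraph $H \subseteq M - S$ with input set $A := I \cap V(H)$ and output set $B := O \cap V(H)$, where $|A|, |B| \ge n - |S|$ and every node of $A$ is joined to every node of $B$ by a directed path inside $H$, hence inside $H_G - S$. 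Since $A \subseteq V(M)\setminus S$ and $|O \setminus B| \le |S| \le s$, there are at least $|V_C| - s$ nodes in $B \cap V_C$, and it suffices to show that each such $b$ has depth at least $d+1$ in $H_G - S$.

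The key step is a second application of the depth robustness of $G$, now with the deleted set $V(G) \setminus A$ (identifying $V(G)$ with $I$). This set has size $n - |A| \le |S| \le e$, so $G$ contains a directed path of length $d$ all of whose vertices lie in $A$; call it $a_1 \to a_2 \to \cdots$. Each edge $(a_i,a_{i+1}) \in E(G)$ contributes the overlay edge $(x_{a_i},x_{a_{i+1}}) \in E(H_G)$, and since $A \cap S = \emptyset$ none of these input nodes has been deleted, so the corresponding path $P'$ among the input nodes survives in $H_G - S$ with length $d$. Fix any $b \in B \cap V_C$. Because the last vertex of $P'$ is some $x_{a_\ell} \in A$, ST-robustness supplies a directed path $R$ from $x_{a_\ell}$ to $b$ in $H_G - S$; concatenating $P'$ with $R$ at $x_{a_\ell}$ yields a walk from the start of $P'$ to $b$, which is a simple path since $H_G$ is a DAG (place $I$ first, then the internal nodes of $M$, then $O$; the overlay edges stay within $I$ and inherit acyclicity from $G$). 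This path is strictly longer than $P'$ because leaving the input layer to reach the output $b$ costs at least one more edge, so $b$ has depth at least $d+1 = t$ in $H_G - S$, as required.

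The one place to be careful is this two-fold use of the depth robustness of $G$: one is tempted to delete both $I \cap S$ and the bad inputs $I \setminus A$ from $G$ simultaneously, but their union could have more than $e$ vertices, which would break the argument. The resolution is simply that $I \setminus A$ already contains $I \cap S$ (because $A \cap S = \emptyset$), so the single deletion set $V(G)\setminus A$, of size at most $|S| \le e$, does everything that is needed. The remaining points are routine: that $H_G$ is acyclic so the concatenated walk is automatically a simple path, that the conclusion $t = d+1$ is insensitive to whether ``length'' counts vertices or edges (extending $P'$ into $M$ adds at least one of either), and the counting $|B \cap V_C| \ge |V_C| - |O \setminus B| \ge |V_C| - s$.
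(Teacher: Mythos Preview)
Your proof is correct and follows essentially the same approach as the paper: apply maximal ST-robustness to get the sets $A \subseteq I$ and $B \subseteq O$, use $(e,d)$-depth-robustness of $G$ after deleting the at most $|S| \leq e$ inputs outside $A$ to find a length-$d$ path living entirely in $A$, and then extend from its endpoint into $B$ via the ST-robust connections to conclude every $b \in B \cap V_C$ has depth at least $d+1$. You are simply more careful than the paper about the bookkeeping (that $A \cap S = \emptyset$ so the overlaid path survives in $H_G - S$, that $I \setminus A$ already absorbs $I \cap S$, and that $H_G$ is a DAG so the concatenated walk is a path).
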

\begin{proof}
Let $S$ be a subset of $|S| \leq s$ nodes in $H_G$. By maximal ST-robustness we can find a set $A$ of $N-|S|$ inputs and $B$ of $N-|S|$ outputs such that every pair of nodes $u \in A$ and $v \in B$ are connected in $H_G-S$. We also note since $A$ contains all but $s$ nodes from $G$ that some node $u \in A$ is the endpoint of a path of length $t$ by $(s,t)$-depth-robustness of $G$. Since $u$ is connected to {\em every} node in $B$ this means that every node $v \in B$ is the endpoint of a path of length {\em at least} $t+1$. 
\end{proof}

This result immediately leads to a $(s,N^{1-\eps},s/N)$-hard pair for any $s \leq N$ which the prover can label in $O(N)$ time as the DAG $G$ has constant indegreee. We expect that in many settings $t=N^{1-\eps}$ would be sufficiently large to ensure that a cheating prover is caught with probability $s/N$ after each challenge i.e., if the verifier expects a response within 3 seconds, but it would take longer to evaluate the hash function $H$ $N^{1-\eps}$ sequential times.
\begin{corollary}
For any constant $\epsilon > 0$ there is a constant indegree DAG $G$ with $O(N)$ nodes along with a target set $V_C \subseteq V(G)$ of $N$ nodes such that the pair $(G,V_C)$ is $(s,t=N^{1-\eps}, s/N)$-hard for any $s \leq N$. 
\end{corollary}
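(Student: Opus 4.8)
The plan is to obtain $(G,V_C)$ by plugging a suitably rescaled constant-indegree depth-robust DAG into the overlay construction $H_G$ from Theorem~\ref{OverlayPOS}. The only point requiring care is that Theorem~\ref{OverlayPOS} guarantees $(s,t,s/|V_C|)$-hardness only for $s\le e$, so the depth-robust graph we feed in must simultaneously have removal parameter $e\ge N$ and depth $\ge N^{1-\epsilon}$; the known constructions give only $(c_3 n, c_4 n^{1-\epsilon'})$-depth robustness on $n$ nodes, so we will rescale $n$ and pick $\epsilon'<\epsilon$.

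First I would invoke Schnitger's~\cite{Sch83} family of constant-indegree DAGs that are $(c_3 n, c_4 n^{1-\epsilon'})$-depth robust for every constant $\epsilon'>0$. Fix $\epsilon':=\epsilon/2$ and set $n:=\lceil cN/c_3\rceil=\Theta(N)$, where $c\ge 1$ is a constant (depending only on $c_3,c_4,\epsilon'$) chosen large enough that $c_4 n^{1-\epsilon'}\ge N^{1-\epsilon}$ for all $N\ge 1$; this is possible because $n^{1-\epsilon'}=\Theta(N^{1-\epsilon'})$ and $1-\epsilon'>1-\epsilon$. Call the resulting DAG $G_0$: it has $n=\Theta(N)$ nodes, constant indegree, removal parameter $e_0=c_3 n\ge cN\ge N$, and depth $d_0=c_4 n^{1-\epsilon'}\ge N^{1-\epsilon}$; in particular $G_0$ is $(N,\,N^{1-\epsilon}-1)$-depth robust.

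Next I would build $H:=H_{G_0}$ exactly as described immediately before Theorem~\ref{OverlayPOS}: start from a maximally ST-robust graph $M$ with $n$ inputs $x_1,\dots,x_n$, $n$ outputs $O$, constant indegree, and $O(n)=O(N)$ nodes (which exists by Theorem~\ref{theorem:STRobust}), add the edge $(x_u,x_v)$ for every $(u,v)\in E(G_0)$, and let $V_C$ be any $N$ of the outputs in $O$. Then $H$ is a DAG with $|V(H)|=|V(M)|=O(N)$ nodes, indegree at most $\mathrm{indeg}(M)+\mathrm{indeg}(G_0)=O(1)$, and $|V_C|=N$. Finally I would apply Theorem~\ref{OverlayPOS} with $e=N$ and $d=N^{1-\epsilon}-1$, concluding that $(H,V_C)$ is $(s,\,t=N^{1-\epsilon},\,s/|V_C|=s/N)$-hard for every $s\le N$, which is the claim. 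I expect the only delicate step --- really bookkeeping rather than an obstacle --- is the rescaling in the first step, where one must juggle $e_0\ge N$ against $d_0\ge N^{1-\epsilon}$ by taking $n$ a large constant multiple of $N$ and $\epsilon'<\epsilon$; everything else is a direct appeal to Theorem~\ref{OverlayPOS} and to the linear-size, constant-indegree ST-robust graphs of Theorem~\ref{theorem:STRobust}.
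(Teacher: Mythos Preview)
Your proposal is correct and follows the same approach as the paper: feed a Schnitger $(e,d)$-depth-robust graph with $e\ge N$ and $d\ge N^{1-\epsilon}$ into the overlay construction $H_G$ and invoke Theorem~\ref{OverlayPOS}. The paper's sketch simply asserts the existence of an $(N,N^{1-\epsilon})$-depth-robust graph on $O(N)$ nodes, whereas you spell out the rescaling (taking $\epsilon'<\epsilon$ and $n$ a constant multiple of $N$) needed to absorb the constants $c_3,c_4$---this extra care is appropriate and does not change the argument.
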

\begin{proof} (sketch)
Let $G$ be an $\left(N, N^{1-\epsilon}\right)$-depth robust graph with $N' = O(N)$ nodes and constant indegree from \cite{Sch83}. We can then take $V_C$ to be any subset of $N$ output nodes in the graph $H_G$ and apply Theorem \ref{OverlayPOS}. 
\end{proof}

If one does not want to relax the requirement that $t=\Omega(N)$ then we can provide a perfectly tight construction with $O(N \log N)$ nodes and constant indegree. Since the graph has constant indegree it will take  $O(N \log N)$ work for the prover to label the graph. This is equivalent to ~\cite{ITCS:Pietrzak19a}, but with perfect tightness $\epsilon = s/N$. 
\begin{corollary}
For any constant $\epsilon > 0$ there is a constant indegree DAG $G$ with $N'=O(N \log N)$ nodes along with a target set $V_C \subseteq V(G)$ of $N$ nodes such that the pair $(G,V_C)$ is $(s,t, s/N)$-hard for any $s \leq N$. 
\end{corollary}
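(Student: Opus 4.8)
The plan is to compose three ingredients: an $\epsilon$-extreme depth-robust graph from~\cite{EC:AlwBloPie18}, the standard indegree-reduction gadget (see e.g.~\cite{C:AlwBlo16}), and the overlay construction of Theorem~\ref{OverlayPOS}. First I would take an $\epsilon$-extreme depth-robust graph $D$ on $M$ nodes, choosing $M = O(N)$ just large enough that $D$ is $(N, N-1)$-depth-robust; since $D$ is $\epsilon$-extreme it is $(e,d)$-depth-robust whenever $e + d \leq (1-\epsilon)M$, so $M = \lceil (2N-1)/(1-\epsilon)\rceil$ suffices, and~\cite{EC:AlwBloPie18} guarantees $\mathrm{indeg}(D) = O(\log M) = O(\log N)$ (the hidden constant depends on $\epsilon$, which is why the statement quantifies over $\epsilon$).

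Because $D$ does not have constant indegree, the second step is the usual indegree reduction: replace each node $v$ of $D$ by a directed path on $\mathrm{indeg}_D(v)$ vertices, route the $i$-th incoming edge of $v$ into the $i$-th vertex of this path, and let all outgoing edges of $v$ leave from the last vertex. This yields a DAG $D'$ of indegree at most $2$ with $O\big(M \cdot \mathrm{indeg}(D)\big) = O(N\log N)$ nodes, and I would check that $D'$ inherits $(N, N-1)$-depth-robustness: a set $S' \subseteq V(D')$ with $|S'|\leq N$ touches at most $N$ of the original nodes, and the path on $\geq N$ vertices guaranteed in $D$ minus those nodes lifts (traversing each internal path once) to a path in $D'-S'$ of at least the same length.

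Finally I would invoke Theorem~\ref{OverlayPOS} with the constant-indegree, $(N,N-1)$-depth-robust graph $D'$ playing the role of $G$: form $H_{D'}$ by starting from a maximally ST-robust graph on $|V(D')|$ inputs and $|V(D')|$ outputs, which by Theorem~\ref{theorem:STRobust} has $O(|V(D')|) = O(N\log N)$ nodes and constant indegree, adding the edge $(x_u,x_v)$ for each edge $(u,v)$ of $D'$, and letting $V_C$ be any $N$ output nodes. The overlay keeps every indegree constant (constant from the ST-robust graph plus constant from $D'$), so $H_{D'}$ has $N' = O(N\log N)$ nodes and constant indegree, and Theorem~\ref{OverlayPOS} yields that $(H_{D'},V_C)$ is $(s,\, t = N,\, s/|V_C| = s/N)$-hard for every $s \leq N$. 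Taking $G := H_{D'}$ finishes the proof, matching the depth $t = N$ of~\cite{ITCS:Pietrzak19a} with the optimal soundness error $s/N$.

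This is essentially a composition of existing pieces, so there is no single deep obstacle; the step requiring the most care is the indegree reduction — verifying that it preserves $(N,\Omega(N))$-depth-robustness and that the node blow-up is only a $\Theta(\log N)$ factor (which in turn relies on the $O(\log N)$ indegree bound of~\cite{EC:AlwBloPie18}). An alternative that sidesteps analysing the reduced graph's depth-robustness is to first build the $O(\log N)$-indegree overlay $H_D$ directly from $D$ and then apply the indegree-reduction gadget to $H_D$, checking instead that the gadget preserves $(s,t,\epsilon)$-hardness of the pair (again mapping each deleted node of the reduced graph to the original node it came from); both routes give the same $O(N\log N)$ node bound.
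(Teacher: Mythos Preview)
Your argument is correct, but the paper's proof is considerably shorter: it simply cites an \emph{already constant-indegree} depth-robust graph from~\cite{CCS:AlwBloHar17}---namely an $(N, N\log N)$-depth-robust DAG on $O(N\log N)$ nodes with bounded indegree---and applies Theorem~\ref{OverlayPOS} directly. In particular, the paper never touches the $\epsilon$-extreme graphs of~\cite{EC:AlwBloPie18} here and never needs an explicit indegree-reduction step; the $\epsilon$ in the corollary statement is vestigial and plays no role in the paper's argument. Your route (start from the $O(\log N)$-indegree extreme depth-robust graph, reduce indegree by the path gadget, then overlay) is a perfectly valid alternative and in effect unpacks part of what~\cite{CCS:AlwBloHar17} already provides; it yields $t=N$ rather than the paper's $t=\Theta(N\log N)$, but both meet the intended $t=\Omega(N)$. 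The only cost of your approach is the extra verification that the path-gadget reduction preserves $(N,\Omega(N))$-depth-robustness, which you handle correctly.
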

\begin{proof} (sketch)
Let $G$ be an $(N, N \log N)$-depth robust graph with $N' = O(N \log N)$ nodes and constant indegree from \cite{CCS:AlwBloHar17}. We can then take $V_C$ to be any subset of $N$ output nodes in the graph $H_G$ and apply Theorem \ref{OverlayPOS}. 
\end{proof}

Finally, if we want to ensure that the graph only has $O(N)$ nodes and $t=\Omega(N)$ we can obtain a perfectly tight construction with indegree $\delta(G)=O(\log N)$. 
\begin{corollary}
For any constant $\epsilon > 0$ there is a DAG $G$ with $O(N)$ nodes and indegree $\delta(G)=O(\log N)$ along with a target set $V_C \subseteq V(G)$ of $N$ nodes such that the pair $(G,V_C)$ is $(s,N, s/N)$-hard for any $s \leq N$. 
\end{corollary}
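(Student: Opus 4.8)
The plan is to instantiate Theorem~\ref{OverlayPOS} with $G$ being an $\epsilon$-extreme depth-robust graph of the type constructed by Alwen et al.~\cite{EC:AlwBloPie18}: such a graph on $N'$ nodes has indegree $O(\log N')$ and is $(e,d)$-depth robust for every pair with $e+d \le (1-\epsilon)N'$. The target set $V_C$ will live on the output layer of the maximally ST-robust overlay $H_G$, and all the work is already packaged in Theorem~\ref{theorem:STRobust}, Theorem~\ref{OverlayPOS}, and~\cite{EC:AlwBloPie18}; only the parameter calibration and an indegree count remain.

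\textbf{Step 1: fix the size of the base graph.} First I would choose $N' = \lceil 2N/(1-\epsilon)\rceil = \Theta(N)$ (the hidden constant depending on $\epsilon$) so that for every $s \le N$ the pair $(s,\,N-1)$ satisfies $s + (N-1) \le 2N-1 \le (1-\epsilon)N'$. Hence the $\epsilon$-extreme depth-robust graph $G$ on $N'$ nodes is simultaneously $(s, N-1)$-depth robust for all $s \le N$; in particular it is $(e,d)$-depth robust with $e = N$ and $d = N-1$, and it has indegree $O(\log N') = O(\log N)$.

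\textbf{Step 2: build the overlay and bound its parameters.} Next I would form $H_G$ exactly as described before Theorem~\ref{OverlayPOS}: start from a maximally ST-robust graph $M$ with $N'$ inputs $I = \{x_1,\dots,x_{N'}\}$ and $N'$ outputs $O$ (which exists with constant indegree and $O(N')$ nodes by Theorem~\ref{theorem:STRobust}), add the edge $(x_u, x_v)$ for every edge $(u,v) \in E(G)$, and let $V_C \subseteq O$ be any set of exactly $N$ output nodes (possible since $|O| = N' \ge N$). Since this adds no new vertices to $M$, we have $|V(H_G)| = |V(M)| = O(N') = O(N)$. Its indegree is the maximum of (i) the constant indegree of $M$ and (ii) $\mathrm{indeg}_M(x_v) + \mathrm{indeg}_G(v) = O(1) + O(\log N') = O(\log N)$; therefore $\delta(H_G) = O(\log N)$.

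\textbf{Step 3: apply Theorem~\ref{OverlayPOS}.} Finally, Theorem~\ref{OverlayPOS} applied to the $(e=N, d=N-1)$-depth-robust graph $G$ gives that $(H_G, V_C)$ is $(s,\, t = d+1 = N,\, s/|V_C| = s/N)$-hard for every $s \le e = N$, which is precisely the claimed statement. This construction matches the indegree $O(\log N)$ and node count $O(N)$ of Pietrzak's proof of space~\cite{ITCS:Pietrzak19a} while attaining the information-theoretically optimal soundness error $\epsilon = s/|V_C|$. I do not expect a genuine obstacle; the only points that require care are (a) calibrating $N'$ against the slack $1-\epsilon$ so that $(N, N-1)$-depth robustness is guaranteed for the whole range $s \le N$, and (b) checking that overlaying the $G$-edges onto $M$ raises the indegree by only $\mathrm{indeg}(G) = O(\log N)$ rather than by more.
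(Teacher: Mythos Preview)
Your proposal is correct and follows essentially the same approach as the paper's proof sketch: instantiate the base graph $G$ as an $\epsilon$-extreme depth-robust graph from \cite{EC:AlwBloPie18} on $\Theta(N)$ nodes (the paper simply takes $N'=3N$ and calls it ``$(N,N)$-depth robust''), build the overlay $H_G$, choose $V_C$ as $N$ output nodes, and invoke Theorem~\ref{OverlayPOS}. Your version is in fact more careful than the paper's sketch---you calibrate $N'$ so that $(N,N-1)$-depth robustness holds exactly (matching $t=d+1=N$ in Theorem~\ref{OverlayPOS}), and you verify the indegree of $H_G$ explicitly, neither of which the paper spells out.
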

\begin{proof} (sketch)
Let $G$ be an $(N,N)$-depth robust graph with $N' = 3N$ nodes from \cite{EC:AlwBloPie18}. We can then take $V_C$ to be any subset of $N$ output nodes in the graph $H_G$ and apply Theorem \ref{OverlayPOS}. 
\end{proof}

\subsection{Wide-Block Labeling Functions}
Chen and Tessaro~\cite{C:CheTes19} introduced source-to-sink depth robust graphs as a generic way of obtaining a wide-block labeling function $H_{\delta, W}:\{0,1\}^{\delta W} \rightarrow \{0,1\}^W$ from a small-block function $H_{fix}:\{0,1\}^{2L}\rightarrow \{0,1\}^L$ (modeled as an ideal primitive). In their proposed approach one transforms a graph $G$ with indegree $\delta$ and into a new graph $G'$ by replacing every node with a source-to-sink depth-robust graph. Labeling a graph $G$ with a wide-block labeling function is now equivalent to labeling $G'$ with the original labeling function $H_{fix}$. The formal definition of Source-to-Sink-Depth-Robustness is presented below:

\begin{definition}[\bf{Source-to-Sink-Depth-Robustness (SSDR)} \cite{C:CheTes19}]
A DAG $G = (V,E)$ is $(e,d)$-source-to-sink-depth-robust (SSDR) if and only if for any $S \subset V$ where $|S| \leq e$, $G - S$ has a path (with length at least $d$) that starts from a source node of $G$ and ends up in a sink node of $G$.
\end{definition}

 If $G$ is $(e,d)$-depth robust and $G'$  is constructed by replacing every node $v$ in $G$ with a $(e^*,d^*)$-source-to-sink-depth-robust (SSDR) and orienting incoming (resp. outgoing) edges into the sources (resp. out of the sinks) then the graph $G'$ is $(ee^*,dd^*)$-depth robust \cite{C:CheTes19} and has cumulative pebbling complexity at least $ed(e^*d^*)$~\cite{EC:AlwBloPie17}. The SSDR graphs constructed in\cite{C:CheTes19} are $(\frac{K}{4}, \frac{\delta K^2}{2})$-SSDR with  $O(\delta K^2)$ vertices and constant indegree. They fix $K:=W/L$ as the ratio between the length of outputs for  $H_{\delta, W}:\{0,1\}^{\delta W} \rightarrow \{0,1\}^W$ and the ideal primitive  $H_{fix}$. Their graph has $\delta K$ source nodes for a tunable parameter $\delta \in \mathbb{N}$,  $O(\delta K^2)$ vertices and constant indegree. Ideally, since we are increasing the number of nodes by a factor of $\delta K^2$ we would like to see the cumulative pebbling complexity increase by a quadratic factor of $\delta^2 K^4$.  Instead, if we start with an $(e,d)$-depth robust graph with cumulative pebbling complexity $O(ed)$ their final graph $G'$ has cumulative pebbling complexity $ed \times \frac{\delta K^3}{8}$. Chen and Tessaro left the problem of finding improved source-to-sink depth-robust graphs as an open research question.

Our construction of ST-robust graphs can asymptotically\footnote{While we improve the asymptotic performance we do not claim to be more efficient for practical values of $\delta, K$.} improve some of their constructions, specifically their constructions of source-to-sink-depth-robust graphs and wide-block labeling functions.

\begin{theorem}
  \label{thm:stsdr}
Let $G$ be a maximal ST-robust graph with depth $d$ and $n$ inputs and outputs. Then $G$ is an $(n-1, d)$-SSDR graph.
\end{theorem}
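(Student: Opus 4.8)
The plan is to read off the SSDR property directly from maximal ST-robustness with the parameter choice $k=|S|$. First I would record the (immediate) observation that in a maximal ST-robust graph the designated input set $I$ is exactly the set of source nodes and the designated output set $O$ is exactly the set of sink nodes — this is how these graphs are built; e.g.\ in Construction~\ref{con:st} the inputs are the sources of $SC^1_n$ and the outputs are the sinks of $SC^2_n$. Consequently, any directed path from a node of $I$ to a node of $O$ is automatically a source-to-sink path of $G$, so to establish SSDR it suffices to produce such an input-to-output path of length $\geq d$ in $G-S$ for every small enough $S$.

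Next I would fix an arbitrary $S \subset V(G)$ with $|S|\leq n-1$ and set $k:=|S|$, so that $0\leq k\leq n-1$. Since $G$ is maximally ST-robust with depth $d$, it is in particular $(k,\,n-k,\,d)$-ST-robust, so by \defref{st} there is a subgraph $H$ of $G-S$ with $|I\cap V(H)|\geq n-k$ and $|O\cap V(H)|\geq n-k$ such that every $s\in I\cap V(H)$ is joined to every $t\in O\cap V(H)$ by a directed path of length at least $d$ contained in $H$. Because $|S|\leq n-1$ we have $n-k\geq 1$, so both $I\cap V(H)$ and $O\cap V(H)$ are nonempty.

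Then I would simply pick any $s\in I\cap V(H)$ and any $t\in O\cap V(H)$: the guaranteed $s$-$t$ path has length $\geq d$ and lies in $H\subseteq G-S$, and by the first observation $s$ is a source and $t$ is a sink of $G$. Hence $G-S$ contains a source-to-sink path of length $\geq d$. Since $S$ was an arbitrary vertex subset of size at most $n-1$, this shows $G$ is $(n-1,d)$-SSDR, completing the argument.

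I do not expect a real technical obstacle here; the proof is essentially a one-line specialization of \defref{st}. The only two points that need a sentence of care are (i) justifying that the labelled input/output sets coincide with the genuine sources/sinks demanded by the SSDR definition (immediate from the construction of the ST-robust graphs), and (ii) the boundary case: we need $n-k\geq 1$, i.e.\ $|S|\leq n-1$, which is precisely why the resfilient-node bound is $e=n-1$ rather than $n$ — deleting all $n$ inputs would leave no source node at all.
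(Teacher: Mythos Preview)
Your proposal is correct and follows essentially the same approach as the paper's proof: apply maximal ST-robustness with depth $d$ to any $S$ with $|S|\leq n-1$ to obtain at least one surviving input-output pair connected by a path of length $\geq d$, which is the required source-to-sink path. Your write-up is in fact more careful than the paper's two-line argument, since you explicitly address the identification of the designated input/output sets with the genuine sources/sinks required by the SSDR definition and you note the boundary reason for the bound $e=n-1$; the paper leaves both implicit.
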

\begin{proof}
  By the maximal ST-robustness property, $n-1$ arbitrary nodes can be removed from $G$ and there will still exist at least one input node that is connected to at least one output node. Since $G$ has depth $d$, the path between the input node and output node must have length at least $d$.
\end{proof}
By applying Theorem \ref{thm:stsdr} to the construction in Corollary \ref{thm:cmax-str}, we can construct a family of $(\delta K, (\delta K)^{1-\epsilon})$-SSDR graphs with $O(\delta K)$ nodes and constant indegree and $\delta K$ sources. In this case the cumulative pebbling complexity of our construction would be already be $ed \times \delta^2 K^{2-\epsilon}$ which is much closer to the quadratic scaling that we would ideally like to see. We are off by just $K^{\epsilon}$ for a constant $\epsilon >0$ that can be arbitrarily small. To make the comparison easier we could also applying Theorem \ref{thm:stsdr} to obtain a family of $(\delta K^2, (\delta K^{2})^{1-\epsilon})$-SSDR graphs with $O(\delta K^{2})$-nodes and constant indegree. While the size of the SSDR matches~\cite{C:CheTes19}  our new graph is $(e \delta K^2, d (\delta K^2)^{1-\epsilon})$-depth robust and has cumulative pebbling complexity $ed \times \delta^{2-\epsilon} K^{4-2\epsilon} \gg ed \delta K^3$.


\bibliography{bounded-parallel-mhf,password,abbrev3,crypto,some-bib}
\appendix

\section*{Appendix}

\section{Connector Graphs}
 We say that a directed acyclic graph $G = (V, E)$ with $n$ input vertices and $n$ output vertices is an \textbf{\textit{n}-connector} if for any {\em ordered list} $x_1,\ldots,x_r$ of $r$ inputs and any {\em ordered list} $y_1,\ldots,y_r$ of $r$ outputs, $1 \leq r \leq n$, there are $r$ vertex-disjoint paths in $G$ connecting input node $x_i$ to output node $y_i$ for each $i \leq r$. 

\subsection{Connector Graphs are ST-Robust}
We remarked in the paper that any \textbf{\textit{n}-connector} is maximally ST-robust. 
\begin{remindertheorem}{Theorem \ref{connector-thm}}
\connectorIsSTRobust
\end{remindertheorem}
\begin{proofof}{Theorem \ref{connector-thm}}
  Let $D \subseteq V(G)$ with $|D| = k$. Consider $G - D$. Let $A = \{ (s_1, t_1), \ldots, (s_m, t_m) \}$, where the input $s_i \in S$ is disconnected from the output $t_i \in T$ in $G - D$, or $s_i \in D$ or $t_i \in D$. Let $B = \emptyset$.

  Perform the following procedure on $A$ and $B$: Pick any pair $(s_p, t_p) \in A$ and add $s_p$ and $t_p$ to $B$. Then remove the pair from $A$ along with any other pair in $A$ that shares either $s_p$ or $t_p$. Continue until $A$ is empty.

  If we consider the nodes of $B$ in $G$, then there are $|B|$ vertex-disjoint paths between the pairs in $B$ by the connector property, and in $G - D$ at least one vertex is removed from each path. Thus $|B| \leq k$, or we have a contradiction.

  If $(s, t) \in G - (D \cup B)$ are an input to output pair, and $s$ is disconnected from $t$, then by the definition of $A$ and $B$ we would have a contradiction, since $(s, t)$ would still be in $A$. Thus all of the remaining inputs in $G - (D \cup B)$ are connected to all the remaining outputs. 

  Hence, if we let $H = G - (D \cup B)$, then $H$ is a subgraph of $G$ with at least $n - k$ inputs and $n - k$ outputs, and there is a path going from each input of $H$ to each of its outputs. Therefore, $G$ is $(k, n- k)$-ST-robust for all $1 \leq k \leq n$.
\end{proofof}

\paragraph{Butterfly Graphs}
A well known family of constant indegree $n$-connectors, for $n = 2^k$, are the $k$-dimensional butterfly graphs $B_k$, which are formed by connecting two FFT graphs on $n$ inputs back to back. 
By Theorem \ref{connector-thm}, the butterfly graph is also a maximally ST-robust graph. However, the butterfly graph has $\Omega(n \log n)$ nodes and does not yield a ST-robust graph of linear size.   Since $B_k$ has $O(n \log n)$ vertices and indegree of 2, a natural question to ask is if there exists $n$-connectors with $O(n)$ vertices and constant indegree.

\begin{figure}[t]
  \centering
  \label{blfy}
  \begin{tikzpicture}
    [place/.style={circle, draw=black, fill=black, inner sep=0pt, minimum size=2mm}]

    \node[place] (a0) at (0,0) {};
    \node[place] (a1) at (0,1) {};
    \node[place] (a2) at (0,2) {};
    \node[place] (a3) at (0,3) {};
    \node[place] (a4) at (0,4) {};
    \node[place] (a5) at (0,5) {};
    \node[place] (a6) at (0,6) {};
    \node[place] (a7) at (0,7) {};
    
    \node[place] (b0) at (2,0) {}
    edge [-] node[auto] {} (a4)
    edge [-] node[auto] {} (a0);
    \node[place] (b1) at (2,1) {}
    edge [-] node[auto] {} (a5)
    edge [-] node[auto] {} (a1);
    \node[place] (b2) at (2,2) {}
    edge [-] node[auto] {} (a6)
    edge [-] node[auto] {} (a2);
    \node[place] (b3) at (2,3) {}
    edge [-] node[auto] {} (a7)
    edge [-] node[auto] {} (a3);
    \node[place] (b4) at (2,4) {}
    edge [-] node[auto] {} (a0)
    edge [-] node[auto] {} (a4);
    \node[place] (b5) at (2,5) {}
    edge [-] node[auto] {} (a1)
    edge [-] node[auto] {} (a5);
    \node[place] (b6) at (2,6) {}
    edge [-] node[auto] {} (a2)
    edge [-] node[auto] {} (a6);
    \node[place] (b7) at (2,7) {}
    edge [-] node[auto] {} (a3)
    edge [-] node[auto] {} (a7);

    \node[place] (c0) at (4,0) {}
    edge [-] node[auto] {} (b2)
    edge [-] node[auto] {} (b0);
    \node[place] (c1) at (4,1) {}
    edge [-] node[auto] {} (b3)
    edge [-] node[auto] {} (b1);
    \node[place] (c2) at (4,2) {}
    edge [-] node[auto] {} (b0)
    edge [-] node[auto] {} (b2);
    \node[place] (c3) at (4,3) {}
    edge [-] node[auto] {} (b1)
    edge [-] node[auto] {} (b3);
    \node[place] (c4) at (4,4) {}
    edge [-] node[auto] {} (b6)
    edge [-] node[auto] {} (b4);
    \node[place] (c5) at (4,5) {}
    edge [-] node[auto] {} (b7)
    edge [-] node[auto] {} (b5);
    \node[place] (c6) at (4,6) {}
    edge [-] node[auto] {} (b4)
    edge [-] node[auto] {} (b6);
    \node[place] (c7) at (4,7) {}
    edge [-] node[auto] {} (b5)
    edge [-] node[auto] {} (b7);

    \node[place] (d0) at (6,0) {}
    edge [-] node[auto] {} (c1)
    edge [-] node[auto] {} (c0);
    \node[place] (d1) at (6,1) {}
    edge [-] node[auto] {} (c0)
    edge [-] node[auto] {} (c1);
    \node[place] (d2) at (6,2) {}
    edge [-] node[auto] {} (c3)
    edge [-] node[auto] {} (c2);
    \node[place] (d3) at (6,3) {}
    edge [-] node[auto] {} (c2)
    edge [-] node[auto] {} (c3);
    \node[place] (d4) at (6,4) {}
    edge [-] node[auto] {} (c5)
    edge [-] node[auto] {} (c4);
    \node[place] (d5) at (6,5) {}
    edge [-] node[auto] {} (c4)
    edge [-] node[auto] {} (c5);
    \node[place] (d6) at (6,6) {}
    edge [-] node[auto] {} (c7)
    edge [-] node[auto] {} (c6);
    \node[place] (d7) at (6,7) {}
    edge [-] node[auto] {} (c6)
    edge [-] node[auto] {} (c7);

    \node[place] (e0) at (8,0) {}
    edge [-] node[auto] {} (d2)
    edge [-] node[auto] {} (d0);
    \node[place] (e1) at (8,1) {}
    edge [-] node[auto] {} (d3)
    edge [-] node[auto] {} (d1);
    \node[place] (e2) at (8,2) {}
    edge [-] node[auto] {} (d0)
    edge [-] node[auto] {} (d2);
    \node[place] (e3) at (8,3) {}
    edge [-] node[auto] {} (d1)
    edge [-] node[auto] {} (d3);
    \node[place] (e4) at (8,4) {}
    edge [-] node[auto] {} (d6)
    edge [-] node[auto] {} (d4);
    \node[place] (e5) at (8,5) {}
    edge [-] node[auto] {} (d7)
    edge [-] node[auto] {} (d5);
    \node[place] (e6) at (8,6) {}
    edge [-] node[auto] {} (d4)
    edge [-] node[auto] {} (d6);
    \node[place] (e7) at (8,7) {}
    edge [-] node[auto] {} (d5)
    edge [-] node[auto] {} (d7);

    \node[place] (f0) at (10,0) {}
    edge [-] node[auto] {} (e4)
    edge [-] node[auto] {} (e0);
    \node[place] (f1) at (10,1) {}
    edge [-] node[auto] {} (e5)
    edge [-] node[auto] {} (e1);
    \node[place] (f2) at (10,2) {}
    edge [-] node[auto] {} (e6)
    edge [-] node[auto] {} (e2);
    \node[place] (f3) at (10,3) {}
    edge [-] node[auto] {} (e7)
    edge [-] node[auto] {} (e3);
    \node[place] (f4) at (10,4) {}
    edge [-] node[auto] {} (e0)
    edge [-] node[auto] {} (e4);
    \node[place] (f5) at (10,5) {}
    edge [-] node[auto] {} (e1)
    edge [-] node[auto] {} (e5);
    \node[place] (f6) at (10,6) {}
    edge [-] node[auto] {} (e2)
    edge [-] node[auto] {} (e6);
    \node[place] (f7) at (10,7) {}
    edge [-] node[auto] {} (e3)
    edge [-] node[auto] {} (e7);

  \end{tikzpicture}
  \caption{The butterfly graph $B_3$ is both an 8-superconcentrator and an 8-connector. All edges are directed from left to right.}
\end{figure}
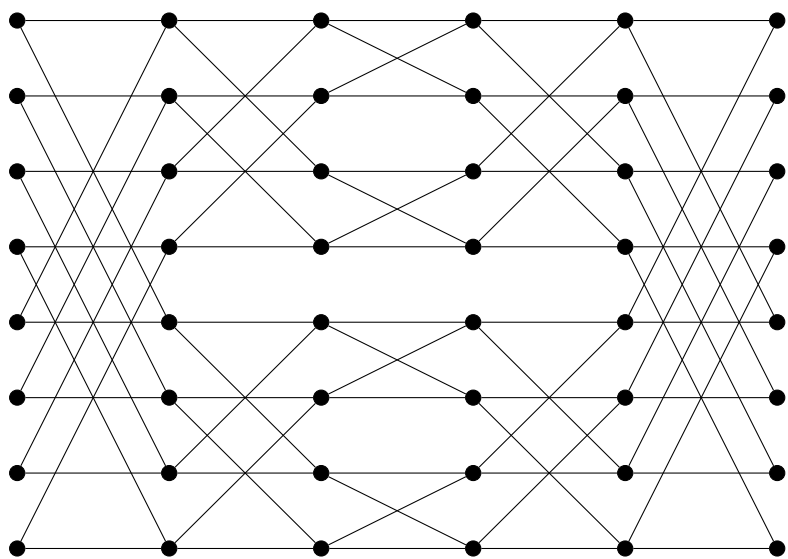

\subsection{Connector Graphs Have $\Omega( n \log n)$ vertices}
An information theoretic argument of  Shannon \cite{shannon} rules out the possibility of linear size $n$-connectors. 

\begin{theorem}{(Shannon~\cite{shannon})}
  An $n$-connector with constant indegree requires at least $\Omega(n \log n)$ vertices.
\end{theorem}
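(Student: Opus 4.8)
The plan is to use a counting (information-theoretic) argument of the Shannon type: an $n$-connector must be able to realize all $n!$ permutations of inputs onto outputs by vertex-disjoint paths, while a vertex of bounded indegree can carry only a bounded amount of routing information, so the number of vertices must be $\Omega(\log n!) = \Omega(n\log n)$.

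First I would fix an $n$-connector $G = (V,E)$ with $N = |V|$ and maximum indegree $d = O(1)$. For every permutation $\pi \in S_n$, the connector property applied with $r = n$ gives a family $\mathcal{R}_\pi = \{P_1^\pi,\dots,P_n^\pi\}$ of $n$ pairwise vertex-disjoint directed paths, where $P_i^\pi$ runs from the $i$-th input to the $\pi(i)$-th output; fix one such family for each $\pi$. Next I would define an encoding of a routing: since the paths are vertex-disjoint, each vertex $v$ lies on at most one path of $\mathcal{R}_\pi$, and if it is an interior or terminal vertex of that path then exactly one edge entering $v$ is used, so I label $v$ with that incoming edge; if $v$ is unused, or is the input that starts its own path, I label $v$ with a symbol $\bot$. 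Each vertex then receives one of at most $d+1$ labels, so the whole labeling lives in a set of size at most $(d+1)^N$.

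The key claim is that this labeling determines $\pi$. Given the labels, pick any output $y$, start there, and repeatedly walk backwards along the unique labeled incoming edge; because $G$ is acyclic, every non-source vertex on a realized path carries a genuine incoming-edge label, and each input carries $\bot$, this traceback is well-defined, never branches, and terminates at the unique input $i$ with $\pi(i) = y$, thereby recovering $\pi^{-1}$ and hence $\pi$. Consequently the map $\pi \mapsto (\text{label vector of } \mathcal{R}_\pi)$ is injective, so $n! \le (d+1)^N$, i.e. $N \ge \log_2(n!)/\log_2(d+1)$. Finishing with Stirling's estimate $\log_2(n!) = \Theta(n\log n)$ and the fact that $d = O(1)$ yields $N = \Omega(n\log n)$.

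I expect the main obstacle to be making the reconstruction step fully rigorous: one must verify that the per-vertex incoming-edge labels genuinely suffice to reassemble the vertex-disjoint path family, and in particular that the backward traversal from an output cannot branch, stall prematurely, or loop. Vertex-disjointness of the realized paths and acyclicity of $G$ are precisely what guarantee this, and everything after that is the routine Stirling computation. A cosmetic variant is to count used edges rather than vertices, or to phrase the bound purely information-theoretically (each vertex transmits at most $\log_2(d+1)$ bits and the network must convey $\log_2 n!$ bits to specify the realized permutation); these are the same argument.
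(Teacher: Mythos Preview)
Your proposal is correct and follows essentially the same Shannon counting argument as the paper: the paper encodes each permutation $\pi$ by the subset $S_\pi \subseteq E$ of edges used in the realizing node-disjoint paths (so $2^m \ge n!$ with $m = O(N)$), while you encode it by a per-vertex incoming-edge label in $\{1,\dots,d\}\cup\{\bot\}$ (so $(d+1)^N \ge n!$); both exploit vertex-disjointness to make the encoding injective and both conclude $N = \Omega(n\log n)$ via Stirling. You even flag the paper's edge-subset variant yourself as ``cosmetic,'' which is accurate.
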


Intuitively, given a $n$-connector with constant indegree with constant indegree and $m$ edges Shannon argued that we can use the $n$-connector to encode any permutatation of $[n]$ using $m$ bits. In more detail fixing any permuation $\pi$ we can find $n$ node disjoint paths from input $i$ to output $\pi(i)$. Because the paths are node disjoint we can encode $\pi$ simply by specifying the subset $S_{\pi}$ of directed edges which appear in one of these node disjoint paths. We require at most $m$ bits to encode $S_{\pi}$ and from $S_{\pi}$ we can reconstruct the set of node-disjoint paths and recover $\pi$. Thus, we must have $m = \Theta(n \log n)$ since we require $\log n! = \Theta(n \log n)$ bits to encode a permutation. 

We stress that this information theoretic argument breaks down if the graph $G$ is only ST-robust. We are guaranteed that $G$ contains a path from input $i$ to output $\pi(i)$, but we are not guaranteed that all of the paths are node disjoint. Thus, $S_{\pi}$ is insufficient to reconstruct $\pi$.

\section{Missing Proofs} \label{apdx:MissingProof}
\begin{remindertheorem}{Theorem \ref{general}}
\thmReductionEdgeToNode
\end{remindertheorem}

\begin{proofof}{Theorem \ref{general}}

  We know that each graph in $\mathbb{M}$ has constant indegree, and that each node $v$ in $G$ will be replaced with a graph in $\mathbb{M}$ with indegree $\text{indeg}\left(M_{\delta(v)}\right)$. Thus $G'$ has maximum indegree $\max_{v \in V(G)}\{ \mathrm{indeg}\left(M_{\delta(v)}\right)\}$. Furthermore, the metanode corresponding to the node $v$ has size $|M_{\delta(v)}|$. Thus $G'$ has $\sum_{v \in V(G)} \left|M_{\delta(v)}\right|$ nodes.
  
  Let $S \subset V(G')$ be a set of nodes that we will remove from $G'$. For a specific node $v \in V(G)$ we let $S_v = S \cap (\{v\} \times V_{\delta(v)})$ denote the subset of nodes deleted from the corresponding metanode. We say that the node $v \in V(G)$ is {\it irrepairable} with respect to $S$ if $|S_v| \geq \delta(v)$; otherwise we say that $v$ is repairable. If a node $v$ is repairable, then because the metanodes are maximally ST-Robust we can find subsets $I_{v,S}$ and $O_{v,S}$ (with $|I_{v,S}|, |O_{v,S}| \geq \delta(v)-|S_v|$) such that each input node $s \in I_{v,S}$ is connected to every output node in $O_{v,S}$.  

 We say that an edge $ (u, v) \in E(G)$ is {\it irrepairable} with respect  if $u$ or $v$ is irrepairable, or if the corresponding edge $e' = (u', v') \in E(G')$ has $u' \not\in O_{u,S}$ or $v' \not\in I_{v, S}$. We let $S_{irr} \subset E(G)$ be the set of irrepairable edges after we remove $S$ from $G$. We begin the proof by first proving two claims. 

  \begin{claim} \label{claim:paths}
    Let $P$ be a path of length $d$ in $G - S_{irr}.$ Then there exists a path of length {\em at least} $d$ in $G' - S.$
  \end{claim}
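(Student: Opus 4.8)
\textbf{Proof proposal for Claim \ref{claim:paths}.}
The plan is to build the path $P'$ in $G'-S$ by processing $P = (v_1, v_2, \ldots, v_{d+1})$ edge by edge and, at each genesis node $v_i$ on $P$, routing through the surviving pairwise-connected core $I_{v_i,S}, O_{v_i,S}$ of its metanode $M_{\delta(v_i)}$. First I would fix the edges $e_i = (v_i, v_{i+1}) \in E(G)$ for $i = 1, \ldots, d$; since $P$ lies in $G - S_{irr}$, each $e_i$ is repairable, so by definition every $v_i$ is repairable (for $2 \le i \le d$, $v_i$ is both the head of $e_{i-1}$ and the tail of $e_i$), the external edge $e_i' = ((v_i, \pi_{out,v_i}(v_{i+1})), (v_{i+1}, \pi_{in,v_{i+1}}(v_i)))$ is present in $G'$, and its endpoints satisfy $(v_i, \pi_{out,v_i}(v_{i+1})) \in O_{v_i,S}$ and $(v_{i+1}, \pi_{in,v_{i+1}}(v_i)) \in I_{v_{i+1},S}$.

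Next I would stitch together a path inside each metanode. For $2 \le i \le d$, the node $v_i$ has an incoming edge $e_{i-1}$ and outgoing edge $e_i$ on $P$; the corresponding input node $x_i := (v_i, \pi_{in,v_i}(v_{i-1}))$ lies in $I_{v_i,S}$ and the corresponding output node $y_i := (v_i, \pi_{out,v_i}(v_{i+1}))$ lies in $O_{v_i,S}$. By maximal ST-robustness of $M_{\delta(v_i)}$ applied to the deleted set $S_{v_i}$, every input of $I_{v_i,S}$ is connected to every output of $O_{v_i,S}$ in $(M_{\delta(v_i)}) - S_{v_i}$, hence there is a path $Q_i$ from $x_i$ to $y_i$ inside the $v_i$-metanode avoiding $S$. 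For the endpoints $v_1$ and $v_{d+1}$ the same argument gives a (possibly trivial) path from the relevant output node $y_1 \in O_{v_1,S}$, resp.\ to the relevant input node $x_{d+1} \in I_{v_{d+1},S}$. Concatenating $Q_1, e_1', Q_2, e_2', \ldots, e_d', Q_{d+1}$ — where consecutive pieces meet exactly because the head of $e_i'$ is $x_{i+1}$ (the start of $Q_{i+1}$) and the tail of $e_i'$ is $y_i$ (the end of $Q_i$) — yields a walk in $G' - S$; since $G'$ is a DAG this walk is automatically a simple path $P'$.

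Finally I would count its length: $P'$ contains all $d$ external edges $e_1', \ldots, e_d'$, and these are distinct, so $|P'| \ge d$; the internal paths $Q_i$ only add length. One should also note that the internal subpaths $Q_i$ are pairwise vertex-disjoint (they live in disjoint metanodes) and disjoint from all the $e_i'$, so no cancellation occurs. The main obstacle, and the point that needs the most care, is making sure the internal routing is \emph{consistent}: at $v_i$ we must enter the metanode precisely at the input node dictated by the previous edge of $P$ and leave precisely at the output node dictated by the next edge, and the definition of $S_{irr}$ is exactly what guarantees that both of these nodes survive in the core $I_{v_i,S} \cup O_{v_i,S}$ — this is where I would be most explicit. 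The rest is bookkeeping about the injective maps $\pi_{in,\cdot}, \pi_{out,\cdot}$ and the fact that $E_{external}$ wires $O(M_{\delta(u)})$ to $I(M_{\delta(v)})$ for each $(u,v) \in E(G)$.
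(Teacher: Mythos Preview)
Your proposal is correct and follows essentially the same approach as the paper's proof: use repairability of each edge on $P$ to certify that the corresponding external edge in $G'$ has its tail in $O_{v_i,S}$ and its head in $I_{v_{i+1},S}$, then invoke maximal ST-robustness inside each metanode to route from the entering input node to the departing output node, and concatenate. Your write-up is simply more explicit than the paper's about the stitching at intermediate nodes, the length count via the $d$ distinct external edges, and the fact that the resulting walk in the DAG $G'$ is a simple path.
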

  \begin{proof}
    In $G - S_{irr}$ we have removed all of the irreparable edges, so any path in the graph contains only repairable edges. By definition, if $(u, v)$ is a repairable edge,  both $u$ and $v$ will be repairable, and $(u, \pi_{out,u}(v)) \in O_{u, S}$ and $(v, \pi_{in,v}(u)) \in I_{v,S}$. Thus the edge corresponding to $(u,v)$ in $G' - S$ will connect the metanodes of $u$ and $v$, and $(u, \pi_{out,u}(v))$ connects to every node in $I_{u, S}$ and $(v, \pi_{out, v}(u))$ connects to every node in $O_{v, S}$. Thus the edges in $G' - S$ corresponding to the edges in $P$ form a path of length at least $d$. 
    \end{proof}

  \begin{claim} \label{claim:irr-bound}
    Let $S_{irr} \subset E(G)$ be the set of irreparable edges with respect to the removed set $S$. Then \[|S_{irr}| \leq 2 |S|.\]
  \end{claim}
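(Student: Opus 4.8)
The plan is to exhibit an explicit charging argument that maps each irrepairable edge to a node of $S$, so that no node of $S$ is charged more than twice. First I would classify the irrepairable edges according to \emph{why} they are irrepairable, following the definition: an edge $(u,v) \in E(G)$ is irrepairable if (a) $u$ is irrepairable, or (b) $v$ is irrepairable, or (c) the corresponding edge endpoint $u' = (u,\pi_{out,u}(v)) \notin O_{u,S}$, or (d) $v' = (v,\pi_{in,v}(u)) \notin I_{v,S}$. I would handle the contribution of each genesis node $v \in V(G)$ separately and show it accounts for at most $2|S_v|$ irrepairable edges, where $S_v = S \cap (\{v\}\times V_{\delta(v)})$; summing over $v$ and using $\sum_v |S_v| = |S|$ then gives $|S_{irr}| \le 2|S|$.

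The key step is the per-metanode count. Fix $v$. If $v$ is irrepairable, i.e.\ $|S_v| \ge \delta(v)$, then $v$ can contribute to at most $\mathrm{indeg}(v) + \mathrm{outdeg}(v) \le 2\delta(v) \le 2|S_v|$ irrepairable edges (all edges incident to $v$), which is within budget. If $v$ is repairable, then by maximal ST-robustness we have $|I_{v,S}|, |O_{v,S}| \ge \delta(v) - |S_v|$, so at most $|S_v|$ of the input nodes of $M_{\delta(v)}$ lie outside $I_{v,S}$ and at most $|S_v|$ of the output nodes lie outside $O_{v,S}$. Since the maps $\pi_{in,v}$ and $\pi_{out,v}$ are injective, at most $|S_v|$ incoming edges $(u,v)$ satisfy condition (d) and at most $|S_v|$ outgoing edges $(v,w)$ satisfy condition (c). So the number of edges made irrepairable \emph{by events local to $v$} (conditions (c) and (d)) is at most $2|S_v|$. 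Every irrepairable edge $(u,v)$ falls into case (a)/(c) charged to $u$ or case (b)/(d) charged to $v$, and I would assign it to the endpoint responsible; double counting across both endpoints only helps. Thus $|S_{irr}| \le \sum_{v \in V(G)} 2|S_v| = 2|S|$.

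The main obstacle I anticipate is bookkeeping to avoid an off-by-factor error: an edge could simultaneously be irrepairable "from both ends" (e.g.\ $u$ irrepairable and $v'\notin I_{v,S}$), and I must make sure each edge is counted once when I sum the per-node bounds, or alternatively allow it to be counted at both endpoints and absorb the slack (since I am only claiming $2|S|$, not a tighter bound, the sloppier double-charging still works). I would therefore phrase it as: assign every irrepairable edge to \emph{some} endpoint responsible for its irrepairability, giving an injection-free charging where node $v$ receives at most $2|S_v|$ charges; then $|S_{irr}| \le \sum_v 2|S_v| = 2|S|$. A secondary point worth stating carefully is why $\sum_v |S_v| = |S|$ — this is immediate because the metanode vertex sets $\{v\}\times V(M_{\delta(v)})$ partition $V(G')$, so the sets $S_v$ partition $S$. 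With these pieces in place the claim follows, and combined with Claim~\ref{claim:paths} and $(e,d)$-edge-depth-robustness of $G$ (apply it to $S_{irr}$, which has size $\le 2|S| \le e$ whenever $|S| \le e/2$) we conclude that $G'-S$ contains a path of length $\ge d$, i.e.\ $G'$ is $(e/2,d)$-depth robust.
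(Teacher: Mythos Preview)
Your proposal is correct and follows essentially the same approach as the paper: both arguments bound, for each genesis node $v$, the number of irrepairable edges attributable to $v$ by $2|S_v|$ (using maximal ST-robustness to get $\delta(v)-|I_{v,S}|,\,\delta(v)-|O_{v,S}|\le |S_v|$ in the repairable case and $2\delta(v)\le 2|S_v|$ in the irrepairable case), then sum over $v$ using $\sum_v|S_v|=|S|$. Your write-up is in fact slightly more explicit than the paper's about the role of injectivity of $\pi_{in,v},\pi_{out,v}$ and about the harmless double-counting of edges blamed at both endpoints.
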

  \begin{proof}
If a node $v$ is repairable with respect to $S$ then let $S_{irr,v}^{in} \subseteq E(G)$ (resp. $S_{irr, v}^{out}$) denote the subset of edges $(u,v) \in E(G)$ (resp. $(v,u) \in E(G)$) that are irrepairable because of $S_v$ i.e., the corresponding edge $e' = (u', v') \in E(G')$ has $v' \not \in I_{v, S}$ (resp. the corresponding edge $(v',u') \in E(G')$ has $v' \not \in O_{v,S}$). Let $S_{irr,v} = S_{irr,v}^{in} \cup S_{irr, v}^{out}$. Similarly, if $v$ is irrepairable we let $S_{irr,v} = \{ (u,v)~:(u,v) \in E(G)\} \cup \{(v,u)~:~(v,u) \in E(G)\}$ denote the set of all of $v$'s incoming and outgoing edges. We note that $|S_{irr}| \leq \sum_v \left|S_{irr,v}\right|$ since $S_{irr} = \bigcup_{v} S_{irr,v}$ any irrepairable edge must be in one of the sets $S_{irr,v}$. Now we claim that $|S_{irr,v}| \leq |S_v|$ where $S_v = S \cap (\{v\} \times V_{\delta(v)})$ denote the subset of nodes deleted from the corresponding metanode. We now observe that
 \begin{align*}
      \left|S_{irr,v}\right| & \leq \left|S_{irr,v}^{in}\right| + \left|S_{irr,v}^{in}\right|  \\
& \leq \left( \delta(v) - |I_{v,S}|\right) + \left( \delta(v) - |O_{v,S}|\right) \leq 2 |S_v| \ .
      \end{align*}
The last inequality invokes maximal ST-robustness to show that  $\delta(v) - |O_{v,S}| \leq |S_v|$ and $ \delta(v) - |I_{v,S}| \leq |S_v|$. If a node $v$ is irrepairable then the subsets $I_{v,S}$ and $O_{v,S}$ might be empty since $\delta(v)-|S_v| \leq 0$, but it still holds that $\delta(v) - |O_{v,S}| \leq |S_v|$ and $ \delta(v) - |I_{v,S}| \leq |S_v|$. Thus, we have 
  
    Thus\[
      |S_{irr}|  \leq \sum_{v}  \left|S_{irr,v}\right|  \leq \sum_v 2 |S_v|  \leq 2|S| \ . 
      \]
\end{proof}

\end{proofof}

\begin{remindertheorem}{\corref{DRone}} \thmCorrolaryofDRone
\end{remindertheorem}

\begin{proofof}{\corref{DRone}} (sketch) We slightly modify our reduction.   Instead of replacing each node $v \in G$ with a copy of $M_{\delta(v)}$, we instead replace each node with a copy of $M_{\delta, v} := M_\delta$, attaching the edges same way as in Construction \ref{con:reduce}. Thus the transformed graph $G'$ has $|V(G)|\times \left| M_{\delta}\right|$ nodes and constant indegree. Let $S \subset V(G')$ be a set of nodes that we will remove from $G'$. By Claim \ref{claim:paths}, there exists a path $P$ in $G' - S$ that passes through $d$ metanodes $M_{\delta, v_1},\ldots, M_{\delta, v_d}$. The only difference is that each  $M_{\delta,v_i}$ is maximally ST-robust {\em with depth} $d_\delta$ meaning we can assume that the sub-path $P_i = P \cap M_{\delta,v_i}$ through each metanode has length $|P_i| \geq d_\delta$. Thus, the total length of the path is at least $\sum_i |P_i| \geq d\cdot d_\delta$.
\end{proofof}

\begin{remindertheorem}{Lemma \ref{paths-lemma} \cite{Sch83}} \pathsLemma
\end{remindertheorem}

\begin{proofof}{Lemma \ref{paths-lemma} \cite{Sch83}}
Consider $H_n^1 - S$. Since $H^1_n$ is $(cn, n^{2/3})$-depth-robust and $|S| = cn/2$, there must exist a path $A_1 = (v_1, \ldots, v_{n^{2/3}})$ in $H_n^1 - S$. Remove all vertices of $A_1$ and repeat to find $A_2, \ldots $. Then we finish with $cn / (2n^{2/3}) = cn^{1/3}/2$ vertex disjoint paths of length $n^{2/3}$. We perform the same process on $H^2_n$ to find the $B_i$. 
\end{proofof}

\begin{remindertheorem}{Lemma \ref{prob}}
  \thmProb
\end{remindertheorem}

\begin{proofof}{Lemma \ref{prob}}
 By Lemma \ref{paths-lemma}, we know that in $G_n - S$ there exists $k:=c'n^{1/3}/2$ vertex disjoint paths $A_1, \ldots, A_{k}$ in $H_n^1$ of length $n^{2/3}$ and  $k$ vertex disjoint paths $B_1, \ldots, B_{k}$ in $H_n^2$ of length $n^{2/3}$. Here, $c'$ is the constant from Lemma \ref{paths-lemma}.   Let $U^i_{j,S}$ be the upper half of the $j$-th path in $H_n^i$ and $L_{j,S}^i$ be the lower half, both of which are relative to the  removed set $S$.

Now for each $i \leq k$ define the event $BAD_{i,S}$ to be the event that there exists $j \leq k$ s.t., $U^1_{j,S}$ is disconnected from $L^2_{i,S}$. We now set  $GOOD_S = [k]\setminus\{i\ :\ BAD_{i,S}\}$ and define $$B_S := \bigcup_{i \in GOOD_S}^k U^2_{i,S}~,~~~\mbox{and}~~~~A_S:= \bigcup_{i=1}^k L^1_{i,S} \  . $$  
Now we claim that for every node $u \in A_S$ and $v \in B_S$ the graph $G_n - S$ contains a path from $u$ to $v$. Since $u \in A_S$ we have $u \in L^1_{i,S}$ for some $i \leq k$. Thus, all nodes in $U^1_{i,S}$ are reachable from $u$. Since, $v \in B_S$ we have $v \in U^2_{j,S}$ for some good $j \in GOOD$. We know that $v$ is reachable from any node in $L^2_{j,S}$. By definition of $GOOD_S$ there must be an edge $(x,y)$ from some node $x \in U^1_{i,S}$  to some node $y \in L^2_{j,S}$ since we already know that there is a directed path from $u$ to $x$ and from $y$ to $v$ there is a directed path from $u$ to $v$. Thus, every pair of nodes in $A_S$ and $B_S$ are connected. 

We have $|A_S| \geq k n^{2/3} = c'n/2$. It remains to argue that for any set $S$ the resulting set  $|B_S| = |GOOD_S| n^{2/3}$ is sufficiently large. Now we define the event 

  \begin{align*}
    BAD_{S} & := |\{i\ :\ BAD_{i,S}\}| > \frac{k}{10}.
  \end{align*}
  
 Intuitively, $BAD_{S}$ occurs when more than a small fraction of the events $BAD_{i,S}$ occur. Assuming that $BAD_S$ never occurs then for any set $S$ we have $$|B_S| \geq \left|GOOD_S\right| n^{2/3} \geq (9/10)k n^{2/3} = 9c'n/20 \ . $$ 

Consider, for the sake of finding the probabilities, that $S$ is fixed before all of the random edges are added to $G_n$.  We will then union bound over all choices of sets $S$.  First we consider the probability that a single upper path, say $U^1_{1,S}$ is disconnected from a particular lower path, say $L^2_{1,S}$. There are $n^{1/3}$ possible lower parts to connect to, and there are $n^{2/3}/2$ nodes in the upper part that can connect to the lower part, and there are $\tau$ random edges added from each node in the upper part, so we have that 
  \[ \P \left[U_{1,S}^1 \mathrm{\ disconnected\ from\ } L^2_{1,S} \right] \leq \left(1 - \frac{1}{2n^{1/3}} \right)^{\tau n^{2/3}/2} \leq \left( \frac{1}{e} \right)^{\tau n^{1/3}/4}. \]
  
Union bounding over all indices $j$ we have \[ \P\left[ BAD_{i,S} \right] \leq k  \left( \frac{1}{e} \right)^{\tau n^{1/3}/4 } =  \left( \frac{1}{e} \right)^{\tau n^{1/3}/4 - \ln k} \ .\] 

We remark that for $i \neq j$ the event $BAD_{i,S}$ is independent of $BAD_{j,S}$ since the $\tau$ random incoming edges connected to $L^2_i$ are sampled independently of the edges for $L^2_j$.  

 We will show that the probability of the event $ BAD_{S}$ is very small and then take a union bound over all possible $S$ to show our desired result.
  \begin{align*}
    \P \left[  BAD_{S} \right] & \leq \binom{k}{k/10} \P \left[BAD_{1,S} \wedge \ldots \wedge BAD_{k/10,S} \right] \\
    & = \binom{k}{k/10} \P \left[ BAD^{upper}_{1,S} \right]^{k/10} \\ 
    & \leq \binom{k}{k/10} \left[ \left( \frac{1}{e} \right)^{\tau n^{1/3}/4 - \ln k}  \right]^{\frac{k}{10}} \\
    & = \binom{k}{k/10} \left( \frac{1}{e} \right)^{(k \tau n^{1/3}-4 k \ln k)/40 }\ .
  \end{align*}
   Finally, we take the union bound over every possible $S$ of size $cn/2$ nodes. Since $G_n$ has $2n$ nodes there are at most $2^{2n} = e^{ 2n \ln 2}$ such sets. Thus,
  \begin{align*}
    \P \left[ \exists S \mathrm{\ s.t.\ } BAD_{S} \right] & \leq  e^{ 2n \ln 2}  \P \left[  BAD^{upper}_{S} \right] \leq \left( \frac{1}{e} \right)^{(k \tau n^{1/3}-4 k \ln k)/40 - 2n \ln 2} \ . 
  \end{align*}

By selecting a sufficiently large constant like $\tau = 800/c'$ we can ensure that $(k \tau n^{1/3}-4 k \ln k)/40 - 2n \ln 2 = 20n - 2n \ln 2 - (k \ln k)/10 \geq n$ so that  

\[  \P \left[ \exists S \mathrm{\ s.t.\ } BAD_{S} \right] \leq 2^{-n} \ . \]

Thus, except with negigible probability for any $S$ of size $cn/2$ the event $BAD_S$ does not occur for any set $S$ selected {\em after} $G_n$ is sampled. 
\end{proofof}

\ignore{
\begin{proofof}{Lemma \ref{prob}}
 By Lemma \ref{paths-lemma}, we know that in $G_n - S$ there exists $cn^{1/3}/2$ vertex disjoint paths $A_1, \ldots, A_{cn^{1/3}/2}$ in $H_n^1$ of length $n^{2/3}$ and  $cn^{1/3}/2$ vertex disjoint paths $B_1, \ldots, B_{cn^{1/3}/2}$ in $H_n^2$ of length $n^{2/3}$.  Let $U^i_{j,S}$ be the upper half of the $j$-th path in $H_n^i$ and $L_{j,S}^i$ be the lower half, both of which are relative to the fixed removed set $S$. Consider, for the sake of finding the probabilities, that $S$ is fixed before all of the random edges are added to $G_n$. 
  
  First we consider the probability that a single upper path, say $U^1_{1,S}$ is disconnected from a single lower path, say $L^2_{1,S}$. There are $n^{1/3}$ possible lower parts to connect to, and there are $n^{2/3}/2$ nodes in the upper part that can connect to the lower part, and there are $\tau$ random edges added from each node in the upper part, so we have that 
  \[ \P \left[U_{1,S}^1 \mathrm{\ disconnected\ from\ } L^2_{1,S} \right] \leq \left(1 - \frac{1}{2n^{1/3}} \right)^{\tau n^{2/3}/2} \leq \left( \frac{1}{e} \right)^{\tau n^{1/3}}. \]

  We now let $k = cn^{1/3}/2$ and define the events
  \begin{align*}
    BAD^{upper}_{i,S} & := |\{j\ :\ U_{j,S}^1 \mathrm{\ disconnected\  from\ } L_{i,S}^2\}| > \frac{k}{10} \\
    BAD^{upper}_{S} & := |\{j\ :\ BAD^{upper}_{j,S}\}| > \frac{k}{10}.
  \end{align*}
  
    The event $BAD^{upper}_{i,S}$ occurs when more than a small fraction of $U^1_{j,S}$ are disconnected from $L^2_{i,S}$, and $BAD^{upper}_{S}$ occurs when more than a small fraction of $BAD^{upper}_{i,S}$ occur. We will bound these probabilities, then take a union bound over all possible $S$ to show our desired result. We see that
  \begin{align*}
    \P \left[ BAD^{upper}_{i,S} \right] & \leq \binom{k}{k/10} \P\left[ U^1_{1,S}\mathrm{\ disc.\ } L^2_{i,S}, \ldots, U^1_{k/10,S} \mathrm{\ disc.\ } L^2_{i,S} \right] \\
    & = \binom{k}{k/10} \left(1 - \frac{1}{2n^{1/3}} \right)^{\tau n^{2/3}/2 \cdot k/10} \leq \binom{k}{k/10} \left( \frac{1}{e} \right) ^ {\tau n^{1/3} k/10},
  \end{align*}
  and that
  \begin{align*}
    \P \left[  BAD^{upper}_{S} \right] & \leq \binom{k}{k/10} \P \left[BAD^{upper}_{1,S}, \ldots, BAD^{upper}_{k/10,S} \right] \\
    & \leq \binom{k}{k/10} \P \left[ BAD^{upper}_{1,S} \right]^{k/10} \\ 
    & \leq \binom{k}{k/10} \left[ \binom{k}{k/10} \left(\frac{1}{e} \right)^{\frac{k}{10} \tau n^{1/3}} \right]^{\frac{k}{10}} \\
    & = \binom{k}{k/10} \left( \frac{1}{e} \right)^{\frac{k^2}{200} \tau n^{1/3}} \left( \binom{k}{k/10} \left( \frac{1}{e} \right)^{\frac{k}{20} \tau n^{1/3}} \right)^{k/10}\\
    & \leq  \binom{k}{k/10} \left( \frac{1}{e} \right)^{\frac{k^2}{200} \tau n^{1/3}} \left( 2^k \left( \frac{1}{e} \right)^{\frac{k}{20} \tau n^{1/3}} \right)^{k/10}
  \end{align*}
  Next, we select $\tau$ such that $(\frac{1}{e})^{\frac{k}{20} \tau n^{1/3}} \cdot 2^k \leq 1$, giving us 
  \begin{align*}
    \P \left[  BAD^{upper}_{S} \right] & \leq \binom{k}{k/10} \left(\frac{1}{e} \right)^{\frac{k^2}{200} \tau n^{1/3}} \\
      & \leq \binom{k}{k/10} \left(\frac{1}{e} \right)^{\frac{\tau n}{\gamma}}, \ \mathrm{for\ } \gamma = \frac{800}{c^2}.
  \end{align*}
  Finally, we take the union bound over every possible $S$ to get
  \begin{align*}
    \P \left[ \exists S \mathrm{\ s.t.\ } BAD^{upper}_{S} \right] & \leq 2^{n+1}  \P \left[  BAD^{upper}_{S} \right] \leq 2^{n+1} \left( \frac{1}{e} \right)^{10n} \ll 1. 
  \end{align*}
  Therefore, with high probability, there exists sets $A$ in $H^1_n$ and $B$ in $H^2_n$ with $|A|,|B| \geq  \frac{9}{10} k  \frac{n^{2/3}}{2} = \frac{9}{40} c n$ such that every node in $A$ connects to every node in $B$. 
\end{proofof}
}

\end{document}